\newtheorem{theorem}{Theorem}[section]
\newtheorem{proposition}[theorem]{Proposition}
\newtheorem{lemma}[theorem]{Lemma}
\newtheorem{corollary}[theorem]{Corollary}
\theoremstyle{definition}
\newtheorem{example}{Example}[section]}
\newcommand{\bbN}{{\mathbb N}}     
\renewcommand{\geq}{\geqslant}
\renewcommand{\leq}{\leqslant}
\renewcommand{\le}{\leqslant}
\newcommand{\Ima}{\operatorname{Im}}
\newcommand{\da}{\downarrow\!}
\newcommand{\hA}{\widehat{A^*}}
\newcommand{\hB}{\widehat{B^*}}
\newcommand{\hN}{\widehat{\bbN}}
\newcommand{\hX}{\widehat{X^*}}
\newcommand{\hf}{\widehat{f}}
\newcommand{\hphi}{\widehat{\varphi}}
\newcommand{\hpsi}{\widehat{\psi}}
\newcommand{\oL}{\overline{L}}
\newcommand{\cC}{\mathcal{C}}
\newcommand{\cL}{\mathcal{L}}
\newcommand{\cPd}{\mathcal{P}^{\downarrow}}
\newcommand{\cPdz}{\mathcal{P}_0^{\downarrow}}
\newcommand{\cV}{\mathcal{V}}
\newcommand{\cW}{\mathcal{W}}
\newcommand{\bSlp}{\mathbf{Sl}^\da}
\newcommand{\LcV}{\Lambda\mathcal{V}}
\newcommand{\LscV}{\Lambda'\mathcal{V}}
\newcommand{\bV}{\mathbf{V}}
\newcommand{\bW}{\mathbf{W}}
\newcommand{\PV}{\mathbf{PV}}
\newcommand{\PzV}{\mathbf{P}_0\mathbf{V}}
\newcommand{\Pd}{\mathbf{P}^{\downarrow}}
\newcommand{\Pdz}{\mathbf{P}_0^{\downarrow}}
\newcommand{\PdV}{\mathbf{P}^{\downarrow}\mathbf{V}}
\newcommand{\PdzV}{\mathbf{P}_0^{\downarrow}\mathbf{V}}
\newcommand{\cro}{\llbracket}
\newcommand{\crf}{\rrbracket}
\newcommand{\shu}{\mathop{\llcorner\!\llcorner\!\!\!\lrcorner}}
\newenvironment{conditions}
{%
	\begin{list}{\rm (\theenumi)}%
	{\noindent%
		\usecounter{enumi}%
		\setlength{\topsep}{2pt}%
		\setlength{\partopsep}{0pt}%
															 \setlength{\itemsep}{2pt}%
		\setlength{\parsep}{0pt}%
		\setlength{\leftmargin}{2.5em}%
		\setlength{\labelwidth}{1.5em}%
		\setlength{\labelsep}{0.5em}%
		\setlength{\listparindent}{0pt}%
		\setlength{\itemindent}{0pt}%
	}%
}%
{\end{list}}%
\newenvironment{conditionsprime}
{%
  \begin{list}{\rm (\theenumi$'$)}%
  {\noindent%
    \usecounter{enumi}%
    \setlength{\topsep}{2pt}%
    \setlength{\partopsep}{0pt}%
                \setlength{\itemsep}{2pt}%
    \setlength{\parsep}{0pt}%
    \setlength{\leftmargin}{2.5em}%
    \setlength{\labelwidth}{1.5em}%
    \setlength{\labelsep}{0.5em}%
    \setlength{\listparindent}{0pt}%
    \setlength{\itemindent}{0pt}%
  }%
}%
{\end{list}}%
\begin{document}

\title{Commutative positive varieties of languages \thanks{The last two authors
acknowledge support from the cooperation programme CNRS/Magyar Tudomanyos
Akad\'emia. The first author was partially supported by CMUP
(UID/MAT/00144/2013), which is funded by FCT (Portugal) with national (MCTES) and
European structural funds (FEDER), under the partnership agreement PT2020. The third
author was partially funded from the European Research Council (ERC) under the
European Union's Horizon 2020 research and innovation programme (grant agreement No
670624) and by the DeLTA project (ANR-16-CE40-0007)}}


\author{\renewcommand\thefootnote{\arabic{footnote}} Jorge Almeida\footnotemark[1],
Zolt\'an \'Esik\footnotemark[2]$\ $ and Jean-\'Eric Pin\footnotemark[3]}

\date{\emph{To the memory of Zolt\'an \'Esik.}}

\footnotetext[1]{CMUP, Dep. Matem\'atica, Faculdade de Ci\^encias,
Universidade do Porto, Rua do Campo Alegre 687, 4169-007 Porto, Portugal.
\Email~\texttt{jalmeida@fc.up.pt}}


\footnotetext[2]{Dept. of Computer Science, University of Szeged, \'Arp\'ad t\'er 2,
H-6720 Szeged, P.O.B. 652 Hungary.}

\footnotetext[3]{IRIF, CNRS and Universit\'e Paris-Diderot, Case 7014,
75205 Paris Cedex 13, France. \Email~\texttt{Jean-Eric.Pin@irif.fr}}

\maketitle

\begin{abstract}
	We study the commutative positive varieties of languages closed under various
	operations: shuffle, renaming and product over one-letter alphabets.
\end{abstract}

\vspace{10mm}

Most monoids considered in this paper are finite. In particular, we use the term
\emph{variety of monoids} for \emph{variety of finite monoids}. Similarly, all
languages considered in this paper are regular languages and hence their syntactic
monoid is finite.

\medskip


\section{Introduction}\label{sec:Introduction}

Eilenberg's variety theorem \cite{Eilenberg76} and its ordered version \cite{Pin95}
provide a convenient setting for studying classes of regular languages. It states
that positive varieties of languages are in one-to-one correspondence with varieties
of finite ordered monoids. 

There is a large literature on operations on regular languages. For instance, the
closure of [positive] varieties of languages under various operations has been
extensively studied: Kleene star \cite{Perrot78}, concatenation product
\cite{BrancoPin09, PinStraubing05, Straubing79b}, renaming \cite{Almeida95,
AlmeidaCanoKlimaPin15, CanoPin12, Reutenauer79, Straubing79} and shuffle
\cite{BerstelBoassonCartonPinRestivo10, CanoPin04, EsikSimon98}. The ultimate goal
would be the complete classification of the positive varieties of languages closed
under these operations. The first step in this direction is to understand the
commutative case, which is the goal of this paper.

We first show in Theorem \ref{thm:ld varieties are varieties} that every commutative
positive $ld$-variety of languages is a positive variety of languages. This means
that if a class of commutative languages is closed under Boolean operations and under
inverses of length-decreasing morphisms then it is also closed under inverses of
morphisms. This result has a curious application in weak arithmetic, stated in
Proposition \ref{prop:arithmetic result}. 

Next we study two operations on languages, shuffle and renaming. These two operations
are closely related to the so-called \emph{power operator} on monoids, which
associates with each monoid the monoid of its subsets. In its ordered version, it
associates with each ordered monoid the ordered monoid of its downsets. We give four
equivalent conditions characterizing the commutative positive varieties of languages
closed under shuffle (Proposition \ref{prop:shuffle}) or under renaming (Proposition
\ref{prop:renaming}).

In order to keep the paper self-contained, prerequisites are presented in some detail
in Section \ref{sec:Prerequisites}. Inequalities form the topic of Section
\ref{sec:Inequalities and identities}. We start with their formal definitions,
describe their various interpretations and establish some of their properties.
General results on renaming are given in Section \ref{sec:Renaming} and
more specific results on commutative varieties are proposed in Section
\ref{sec:Commutative varieties}, including our previously mentioned result on
$ld$-varieties. Our characterizations of the positive varieties of languages closed
under shuffle or renaming form the meat of Section \ref{sec:Operations on commutative
languages} and are illustrated by three examples in Section \ref{sec:Three examples}.
Finally, a few research directions are suggested in Section \ref{sec:Conclusion}.


\section{Prerequisites}\label{sec:Prerequisites}

In this section, we briefly recall the following notions: lattices and (positive)
varieties of languages, syntactic ordered monoids, varieties of ordered monoids, 
stamps, downset monoids, free profinite monoids.


\subsection{Languages}\label{subsec:Languages}

Let $A$ be a finite alphabet. Let $[u]$ be the \emph{commutative closure} of a word
$u$, that is, the set of words commutatively equivalent to $u$. For instance, $[aab]
= \{aab, aba, baa\}$. A language $L$ is \emph{commutative} if, for every word $u \in
L$, $[u]$ is contained in $L$.

A \emph{lattice of languages} is a set $\cL$ of regular languages of $A^*$ containing
$\emptyset$ and $A^*$ and closed under finite union and finite intersection. It is
\emph{closed under quotients} if, for each $L \in \cL$ and $u \in A^*$, the languages
$u^{-1}L$ and $Lu^{-1}$ are also in $\cL$.

The \emph{shuffle product} (or simply \emph{shuffle}) of two languages $L_1$ and
$L_2$ over $A$ is the language
\begin{multline*}
	L_1 \shu L_2 = \{ w \in A^* \mid w = u_1v_1 {}\dotsm{} u_nv_n \text{ for
	some words $u_1, \ldots, u_n$} \\
	\text{$v_1, \ldots, v_n$ of $A^*$ such that $u_1 {}\dotsm{} u_n \in
	L_1$ and $v_1 {}\dotsm{} v_n \in L_2$} \}
\end{multline*}
The shuffle product defines a commutative and associative operation on the set of
languages over $A$.

A \emph{renaming} or \emph{length-preserving morphism} is a morphism $\varphi$ from
$A^*$ into $B^*$, such that, for each word $u$, the words $u$ and $\varphi(u)$ have
the same length. It is equivalent to require that, for each letter $a$, $\varphi(a)$
is also a letter, that is, $\varphi(A) \subseteq B$. Similarly, a morphism is
\emph{length-decreasing} if the image of each letter is either a letter or the empty
word.

A \emph{class of languages} is a correspondence $\cC$ which associates with
each alphabet $A$ a set $\cC(A^*)$ of regular languages of $A^*$. 

A \emph{positive variety of languages} is a class of regular languages $\cV$ such
that:
\begin{conditions}
	\item \label{item:lattice} for every alphabet $A$, $\cV(A^*)$ is a lattice of
	languages closed under quotients,

	\item \label{item:inverse morphism} if $\varphi: A^* \rightarrow B^*$ is a
	morphism, $L\in \cV(B^*)$ implies $\varphi^{-1}(L)\in {\cal V}(A^*)$.
\end{conditions}
A \emph{variety of languages} is a positive variety $\cV$ such that each lattice
$\cV(A^*)$ is closed under complement. We shall also use two slight variations of
these notions. A \emph{positive $ld$-variety \textup{[}$lp$-variety\textup{]} of languages}
\cite{EsikIto03, PinStraubing05} is a class of regular languages $\cV$ satisfying
\eqref{item:lattice} and
\begin{conditionsprime}
	\setcounter{enumi}{1} 	
	\item \label{item:inverse ld-morphism} if $\varphi: A^* \rightarrow B^*$ is a
	length-decreasing [length-preserving] morphism, then $L\in \cV(B^*)$ implies
	$\varphi^{-1}(L)\in \cV(A^*)$.
\end{conditionsprime}


\subsection{Syntactic ordered monoids}\label{subsec:Syntactic ordered monoids}

An \emph{ordered monoid} is a monoid $M$ equipped with a partial order $\leq$
compatible with the product on $M$: for all $x, y, z \in M$, if $x \leq y$ then $zx
\leq zy$ and $xz \leq yz$.

The \emph{ordered syntactic monoid} of a language was first introduced by M.P.
Sch\"utzenberger in \cite[p.~10]{Schutzenberger56}. Let $L$ be a language of $A^*$.
The \emph{syntactic preorder} of $L$ is the relation $\leq_L$ defined on $A^*$ by
$u \leq_L v$ if, for every $x, y \in A^*$, $xuy \in L$ implies $xvy \in 
L$. When the language $L$ is clear from the context, we may write $\le$
instead of~$\le_L$. As is standard in preorder notation, we write $u <
v$ to mean that $u \le v$ holds but $v\le u$ does not.

For instance, let $A = \{a\}$. If $L = a + a^3$, then $a^3 \leq_L a$,
but if $L = a + a^3a^*$, then $a \leq_L a^3$.

The associated equivalence relation $\sim_L$, defined by $u \sim_L v$ if $u \leq_L v$
and $v \leq_L u$, is the \emph{syntactic congruence} of $L$ and the quotient monoid
$M(L) = A^*/{\sim_L}$ is the \emph{syntactic monoid} of $L$. The natural morphism
$\eta: A^* \to A^*/{\sim_L}$ is the \emph{syntactic stamp} of $L$. The
\emph{syntactic image} of $L$ is the set $P = \eta(L)$.

The \emph{syntactic order} $\leq$ is defined on $M(L)$ as follows: $u \leq v$ if and
only if for all $x,y \in M$, $xuy \in P$ implies $xvy \in P$. The partial order
$\leq$ is compatible with multiplication and the resulting ordered monoid $(M, \leq)$
is called the \emph{ordered syntactic monoid} of $L$.
\begin{example}\label{ex:1 + a}
Let $L$ be the language $1 + a$. The syntactic monoid of $L$ is the commutative
monoid $\{1, a, 0\}$ satisfying $a^2 = 0$. The syntactic order is $0 < a < 1$.
Indeed, one has $a \le 1$ since, for each $r \geq 0$, the condition $a^ra \in L$
implies $a^r \in L$. Similarly, one has $0 \le a$ since, for each $r \geq 0$, the
condition $a^r a^2 \in L$ implies $a^ra \in L$. However, $1 \not\leq a$ and $a
\not\leq 0$ since $a \in L$ but $a^2 \notin L$.
\end{example}
\begin{example}\label{ex:a + a6a*}
Let $L$ be the language $a + a^6a^*$. The syntactic monoid of $L$ may be identified
with the commutative monoid $\{0, 1, \ldots, 6\}$ equipped with the operation $xy =
\min\{x +y, 6\}$. In particular, $0$ and $6$ are the unique idempotents. 
The syntactic order is represented as follows (a path from $i$ to
$j$ means that $i < j$):
\begin{center}
	\thinlines
	\unitlength=4pt
	\begin{picture}(60,22)(0,-11)
		\gasset{Nw=4,Nh=4,Nmr=2,curvedepth=0}
		\node(A0)(0,0){$0$}
		\node(A1)(10,0){$1$}
		\node(A2)(20,0){$2$}
		\node(A3)(30,0){$3$}
		\node(A4)(40,0){$4$}
		\node(A5)(50,0){$5$}
		\node(A6)(60,0){$6$}
		\drawedge[curvedepth=7,syo=1.4,eyo=1.2](A1,A6){} 
		\drawedge(A5,A6){}
		\drawedge(A4,A5){}
		\drawedge(A3,A4){}
		\drawedge(A2,A3){}
		\drawedge[curvedepth=-7,syo=-1.4,eyo=-1.2](A0,A5){}
	\end{picture}
\end{center}
For instance, one has $1 < 6$ since, for each $r \geq 0$, the condition $a a^r \in L$
implies $a^6 a^r \in L$. Similarly, one has $0 < 5$ since, for each $r \geq 0$, the
condition $a^r \in L$ implies $a^5 a^r \in L$. But $1 \not< 5$ since $a \in L$ but
$a^5 \notin L$.
\end{example}
\begin{example}\label{ex:a + (a3 + a4)(a7)*}
Let $L$ be the language $a + (a^3 + a^4)(a^7)^*$. Its minimal automaton is 
represented below.
\begin{center}
	\thinlines
	\unitlength=4pt
	\begin{picture}(58.514533,38)(0,-19)
		\gasset{Nw=4,Nh=4,Nmr=2,curvedepth=0}\small
		\node[Nmarks=i,iangle=180](A1)(0,0){$0$}
		\node[Nmarks=f,fangle=90](A2)(15,0){$1$}
		\node(A3)(30.000000,0.00000000){$2$}
		\node[Nmarks=f,fangle=90](A4)(35.647653,11.727472){$3$}
		\node[Nmarks=f,fangle=90](A5)(48.337814,14.623919){$4$}
		\node(A6)(58.514533,6.5082561){$5$}
		\node(A7)(58.514533,-6.5082561){$6$}
		\node(A8)(48.337814,-14.623919){$7$}
		\node(A9)(35.647653,-11.727472){$8$}
		\drawedge(A1,A2){$a$}
		\drawedge(A2,A3){$a$}
		\drawedge[curvedepth=1.5](A3,A4){$a$}
		\drawedge[curvedepth=1.5](A4,A5){$a$}
		\drawedge[curvedepth=1.5](A5,A6){$a$}
		\drawedge[curvedepth=1.5](A6,A7){$a$}
		\drawedge[curvedepth=1.5](A7,A8){$a$}
		\drawedge[curvedepth=1.5](A8,A9){$a$}
		\drawedge[curvedepth=1.5](A9,A3){$a$}
	\end{picture}
\end{center}
The syntactic monoid of $L$ is the monoid presented by $\langle a \mid a^9 = a^2
\rangle$. The syntatic order is the equality relation.
\end{example}


\subsection{Stamps}\label{subsec:Stamps}

Monoids and ordered monoids are used to recognise languages, but there is a slightly
more restricted notion. A \emph{stamp} is a surjective monoid morphism $\varphi:A^*
\to M$ from a finitely generated free monoid $A^*$ onto a finite monoid $M$.
If $M$ is an ordered monoid, $\varphi$ is called an \emph{ordered stamp}.

The \emph{restricted direct product} of two [ordered] stamps $\varphi_1:A^* \to
M_1$ and $\varphi_2:A^* \to M_2$ is the stamp $\varphi$ with domain
$A^*$ defined by $\varphi(a) = (\varphi_1(a), \varphi_2(a))$ (see
Figure \ref{fig:restricted product}). The image of $\varphi$ is an
[ordered] submonoid of the [ordered] monoid $M_1 \times M_2$.
\begin{figure}[H]
	\begin{center}
		\thinlines
		\unitlength=4pt
		\begin{picture}(35,24)(0,2)
			\gasset{Nframe=n,Nw=6,Nh=5,Nmr=2.5,curvedepth=0}\small
			\node(A)(0,13){$A^*$}
			\node(M1)(35,0){$M_1$}
			\node(M2)(35,26){$M_2$}
			\node[Nw=23](M12)(25,13){$\Ima(\varphi) \subseteq M_1 \times M_2$}
			\drawedge[ELside=r](A,M1){$\varphi_1$}
			\drawedge(A,M2){$\varphi_2$}
			\drawedge[sxo=-7](A,M12){$\varphi$}
			\drawedge[sxo=-4](M12,M1){$\pi_1$}
			\drawedge[sxo=-4,ELside=r](M12,M2){$\pi_2$}
		\end{picture}
	\end{center}
	\caption{The restricted direct product of two stamps.}\label{fig:restricted
	product}
\end{figure}
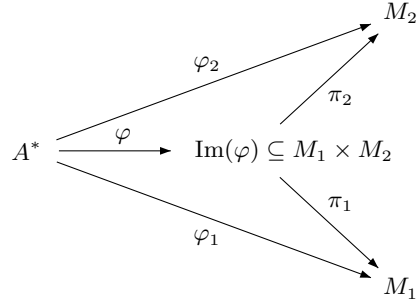
\noindent Recall that an \emph{upset} of an ordered set $E$ is a subset $U$ of $E$
such that the conditions $x \in U$ and $x \leq y$ imply $y \in U$. A language $L$ of
$A^*$ is \emph{recognised by a stamp $\varphi: A^* \to M$} if there exists a subset
$P$ of $M$ such that $L = \varphi^{-1}(P)$. It is \emph{recognised by an ordered
stamp $\varphi: A^* \to M$} if there exists an upset $U$ of $M$ such that $L =
\varphi^{-1}(U)$.

It is easy to see that if two languages $L_0$ and $L_1$ of $A^*$ are recognised by 
the [ordered] stamps $\varphi_0$ and $\varphi_1$, respectively, then $L_0 \cap L_1$ 
and $L_0 \cup L_1$ are both recognised by the restricted product of $\varphi_0$ and 
$\varphi_1$.


\subsection{Varieties}\label{subsec:Varieties}

Varieties of languages and their avatars all admit an algebraic characterization. We
first describe the corresponding algebraic objects and summarize the correspondence
results at the end of this section. See \cite{Pin12} for more details.

[Positive] varieties of languages correspond to varieties of [ordered] monoids. A
\emph{variety of monoids} is a class of monoids closed under taking submonoids,
quotients and finite direct products. \emph{Varieties of ordered monoids} are defined
analogously.

The description of the algebraic objects corresponding to positive $lp$- and
$ld$-varieties of languages is more complex and relies on the notion of stamp defined
in Section \ref{subsec:Stamps}. An \emph{$lp$-morphism} from a stamp $\varphi : A^*
\rightarrow M$ to a stamp $\psi: B^* \rightarrow N$ is a pair $(f, \alpha)$, where
$f: A^*\rightarrow B^*$ is length-preserving, $\alpha: M \rightarrow N$ is a morphism
of [ordered] monoids, and $\psi \circ f = \alpha \circ
\varphi$.
	\begin{center}
		\thinlines
		\unitlength=4pt
		\begin{picture}(12,16)(0,-1)
			\gasset{Nframe=n,Nw=5,Nh=5,Nmr=2.5,curvedepth=0}\small
			\node(A)(0,12){$A^*$}
			\node(B)(12,12){$B^*$}
			\node(M)(0,0){$M$}
			\node(N)(12,0){$N$}
			\drawedge(A,B){$f$}
			\drawedge[ELside=r](A,M){$\varphi$}
			\drawedge(B,N){$\psi$}
			\drawedge(M,N){$\alpha$}
		\end{picture}
	\end{center}
The $lp$-morphism $(f, \alpha)$ is an \emph{$lp$-projection} if $f$ is surjective. It
is an \emph{$lp$-inclusion} if $\alpha$ is injective. 

An \emph{\textup{[}ordered\textup{]} $lp$-variety of stamps} is a class of [ordered]
stamps closed under $lp$-projections, $lp$-inclusions and finite restricted direct
products. \emph{$\textup{[}$Ordered\textup{]} $ld$-varieties of stamps} are defined
in the same way, just by replacing $lp$ by $ld$ and length-preserving by
length-decreasing everywhere in the definition.

Here are the announced correspondence results. Eilenberg's variety theorem
\cite{Eilenberg76} and its ordered counterpart \cite{Pin95} give a bijective
correspondence between varieties of [ordered] monoids and positive varieties of
languages. Let $\bV$ be a variety of finite [ordered] monoids and, for each alphabet
$A$, let $\cV(A^*)$ be the set of all languages of $A^*$ whose [ordered] syntactic
monoid is in $\bV$. Then $\cV$ is a [positive] variety of languages. Furthermore, the
correspondence $\bV \rightarrow \cV$ is a bijection between varieties of [ordered]
monoids and [positive] varieties of languages.

There is a similar correspondence for $lp$-varieties of [ordered] stamps
\cite{EsikIto03, Straubing02}. Let $\bV$ be an $lp$-variety of [ordered] stamps. For
each alphabet $A$, let $\cV(A^*)$ be the set of all languages of $A^*$ whose
[ordered] syntactic stamp is in $\bV$. Then $\cV$ is a [positive] $lp$-variety of
languages. Furthermore, the correspondence $\bV \rightarrow \cV$ is a bijection
between $lp$-varieties of [ordered] stamps and [positive] $lp$-varieties of
languages.

Finally, there is a similar statement for $ld$-varieties of [ordered] stamps.


\subsection{Downset monoids}\label{subsec:Downset monoids}

Let $(M,\leq)$ be an ordered monoid. A \emph{downset} of $M$ is a subset $F$ of
$M$ such that if $x \in F$ and $y \leq x$ then $y \in F$. The \emph{product of two
downsets} $X$ and $Y$ is the downset
$$
	XY = \{ z \in M \mid \text{there exist $x \in X$ and $y \in Y$ such that $z \leq
	xy$} \}
$$
This operation makes the set of nonempty downsets of $M$ a monoid, denoted by
$\cPd(M)$ and called the \emph{downset monoid} of $M$. Its identity element is $\da
1$. If one also considers the empty set, one gets a monoid with zero, denoted
$\cPdz(M)$, in which the empty set is the zero. For instance, if $M$ is the trivial
monoid, $\cPdz(M)$ is isomorphic to the ordered monoid $\{0, 1\}$, consisting of an
identity $1$ and a zero $0$, ordered by $0 < 1$. This monoid will be denoted by
$U_1^\da$.

The monoids $\cPdz(M)$ and $\cPd(M)$ are closely related. First, $\cPd(M)$ is a
submonoid of $\cPdz(M)$. Secondly, as shown in \cite[Proposition~5.1,
p.~452]{CanoPin04}, $\cPdz(M)$ is isomorphic to a quotient monoid of $\cPd(M) \times
U_1^\da$.

The monoids $\cPd(M)$ and $\cPdz(M)$ are naturally ordered by inclusion, denoted by
$\leq$. Note that $X \leq Y$ if and only if, for each $x \in X$, there exists $y \in
Y$ such that $x \leq y$.

Given a variety of ordered monoids $\bV$, let $\PdV$ [$\PdzV$] denote the variety of
ordered monoids generated by the monoids of the form $\cPd(M)$ [$\cPdz(M)$], where $M
\in \bV$. The operator $\Pd$ was intensively studied in \cite{AlmeidaCanoKlimaPin15}.
In particular, it is known that both $\Pd$ and $\Pdz$ are idempotent operators.

The hereinabove relation between $\cPdz(M)$ and $\cPd(M)$ can be extended to
varieties as follows. Let $\bSlp$ be the variety of ordered monoids generated by
$U_1^\da$. It is a well-known fact that $\bSlp = \cro xy = yx, x = x^2, x \leq
1\crf$. Moreover, the equality
\begin{equation}
	\PdzV = \PdV \vee \bSlp	\label{eq:PdzV}
\end{equation}
holds for any variety of ordered monoids $\bV$.


\subsection{Free profinite monoid}\label{subsec:Free profinite monoid}

We refer the reader to \cite{Almeida95, Almeida02, Almeida05, Weil02} for detailed
information on profinite completions and we just recall here a few useful facts. Let
$d$ be the profinite metric on the free monoid $A^*$. We let $\hA$ denote the
completion of the metric space $(A^*, d)$. The product on $A^*$ is uniformly
continuous and hence has a unique continuous extension to $\hA$. It follows that
$\hA$ is a compact monoid, called the \emph{free profinite monoid} on $A$.
Furthermore, every stamp $\varphi: A^* \to M$ admits a unique continuous extension
$\hphi: \hA \to M$. Similarly, every morphism $f:A^* \to B^*$ admits a unique
continuous extension $\hf: \hA \to \hB$. In the sequel, $\oL$ denotes the closure in
$\hA$ of a subset $L$ of $A^*$.

The length of a word $u$ is denoted by $|u|$. The length map $u \to |u|$ defines a
morphism from $A^*$ to the additive semigroup $\bbN$. If $A = \{a\}$, this morphism
is actually an isomorphism, which maps $a^n$ to $n$. In other words, $(\bbN, +, 0)$
is the free monoid with a single generator. We let $\hN$ denote the profinite
completion of $\bbN$, which is of course isomorphic to $\widehat{a^*}$.

This allows one to define the length $|u|$ of an element $u$ of $\hA$ simply by 
extending by continuity the length map defined on $A^*$. The length map is actually 
a morphism, that is, $|1| = 0$ and $|uv| = |u| + |v|$ for all $u, v \in \hA$. 


\section{Inequalities and identities}\label{sec:Inequalities and identities}

The inequalities [equalities] occurring in this paper are of the form $u \leq v$ $[u
= v]$, where $u$ and $v$ are both in $\hA$ for some alphabet $A$. In an ordered
context, $u = v$ is often viewed as a shortcut for $u \leq v$ and $v \leq u$.

However, these inequalities are interpreted in several different contexts, which may
confuse the reader. Let us clarify matters by giving precise definitions for each
case.


\subsection{Inequalities}\label{subsec:Inequalities}

\paragraph{Ordered monoids.}

Let $M$ be an ordered monoid, let $X$ be an alphabet and let $u, v \in \hX$. Then $M$
\emph{satisfies the inequality $u \leq v$} if, for each morphism $\psi: X^* \to M$,
$\hpsi(u) \leq \hpsi(v)$.

This is the formal definition but in practice, it is easier to think of $u$ and $v$
as terms in which one substitutes each symbol $x \in X$ for an element of $M$. For
instance, $M$ satisfies the inequality $xy^{\omega+1} \leq x^\omega y$ if, for all
$x, y \in M$, $xy^{\omega+1} \leq x^\omega y$.

\paragraph{Varieties of ordered monoids.}

Let $\bV$ be a variety of ordered monoids, let $X$ be an alphabet and let $u, v \in
\hX$. Then $\bV$ \emph{satisfies an inequality $u \leq v$} if each ordered monoid of
$\bV$ satisfies the inequality. In this context, equalities of the form $u = v$ are
often called \emph{identities}.

It is proved in \cite{PinWeil96b} that any variety of ordered monoids may be defined
by a (possibly infinite) set of such inequalities. This result extends to the ordered
case the classical result of Reiterman \cite{Reiterman82} and Banaschewski
\cite{Banaschewski83}: any variety of monoids may be defined by a (possibly infinite)
set of identities.

\paragraph{The case of $lp$-varieties and $ld$-varieties of ordered stamps.}

Let $\bV$ be an $lp$-variety [$ld$-variety] of ordered stamps, let $X$ be an alphabet
and let $u, v \in \hX$. Then $\bV$ \emph{satisfies the inequality $u \leq
v$} if, for each stamp $\varphi:A^* \to M$ of $\bV$ and for every length-preserving
[length-decreasing] morphism $f: X^* \to A^*$, $\hphi(\hf(u)) \leq \hphi(\hf(v))$.

The difficulty is to interpret correctly $\hf(u)$. If $f$ is length-preserving,
$\hf(u)$ is obtained by replacing each symbol $x \in X$ by a letter of $A$. For
instance, an $lp$-variety $\bV$ satisfies the inequality $xy^{\omega+1} \leq x^\omega
y$ if, for each stamp $\varphi:A^* \to M$ of $\bV$ and for all letters $a, b \in A$,
$\hphi(ab^{\omega+1}) \leq \hphi(a^\omega b)$.

It is proved in \cite{Kunc03,PinStraubing05} that any ordered $lp$-variety of stamps
may be defined by a (possibly infinite) set of such inequalities.

If $f$ is length-decreasing, this is even more tricky. Then $\hf(u)$ is obtained by
replacing each symbol $x \in X$ by either a letter of $A$ or by the empty word. For
instance, an $ld$-variety $\bV$ satisfies the inequality $xy^{\omega+1} \leq x^\omega y$
if, for each stamp $\varphi:A^* \to M$ of $\bV$ and for all letters $a, b \in A$,
$\hphi(ab^{\omega+1}) \leq \hphi(a^\omega b)$, $\hphi(b^{\omega+1}) \leq \hphi(b)$ 
and $\hphi(a) \leq \hphi(a^\omega)$.

It is proved in \cite{Kunc03,PinStraubing05} that any ordered $ld$-variety of stamps
may be defined by a (possibly infinite) set of such inequalities.

\medskip

\noindent We will also need the following elementary result. Recall that a variety of
[ordered] monoids is \emph{aperiodic} if it satisfies the identity $x^\omega =
x^{\omega + 1}$. 

\begin{proposition}\label{prop:aperiodic}
	Let $\bV$ be an aperiodic variety of ordered monoids. Then, for each $\alpha \in
	\hN$, $\bV$ satisfies the identity $x^\omega = x^\omega x^\alpha$.
\end{proposition}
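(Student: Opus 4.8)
The plan is to reduce the profinite identity to a statement about finite ordered monoids and then to a purely arithmetic fact about the profinite natural numbers $\hN$ acting by exponentiation on a fixed element. First I would recall that, by definition, $\bV$ satisfies $x^\omega = x^\omega x^\alpha$ if and only if every ordered monoid $M \in \bV$ satisfies it, i.e.\ for every $m \in M$ one has $m^\omega = m^\omega \hat{\pi}(\alpha)$, where $\hat\pi\colon \hN \to \langle m\rangle$ is the continuous extension of the map $n \mapsto m^n$ on the cyclic submonoid generated by $m$. So it suffices to fix a finite aperiodic ordered monoid $M$, fix $m \in M$, and show $m^\omega = m^\omega \cdot (m^\alpha)$ where $m^\alpha := \hat\pi(\alpha)$.

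Next I would examine the cyclic submonoid $N = \{m, m^2, m^3, \dots\}$. Since $M$ is finite, $N$ is finite; since $\bV$ is aperiodic, $N$ satisfies $x^\omega = x^{\omega+1}$, which for the single generator $m$ forces the "tail" of the cyclic monoid to collapse: there is an index $k$ (the index of $m$) with $m^k = m^{k+1}$, hence $m^k = m^{k+j}$ for all $j \ge 0$, and $m^k = m^\omega$. In other words, the map $n \mapsto m^n$ from $\bbN$ to $N$ factors through the quotient $\bbN \to \{0,1,\dots,k\}$ where all integers $\ge k$ are identified (the monoid of Example~\ref{ex:a + a6a*} with $6$ replaced by $k$, essentially). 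This quotient map $\bbN \to N$ is continuous for the profinite metric (it is a morphism onto a finite monoid), so it extends to $\hN \to N$, and the key point is that under this extension the image of \emph{every} $\alpha \in \hN$ is one of $m, m^2, \dots, m^k$, and in particular $m^\omega \cdot m^\alpha$ lands in $N$.

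The core arithmetic step is then: for any $\alpha \in \hN$ and any finite cyclic monoid with idempotent $e = m^\omega = m^k$, one has $e \cdot m^\alpha = e$. To see this, note $e m^n = m^{k+n}$ for $n \ge 0$, and $m^{k+n} = m^k = e$ for all $n \ge 0$ (by the collapse above), so the continuous map $\beta \mapsto e \cdot m^\beta$ from $\hN$ to $N$ is constant equal to $e$ on the dense subset $\bbN \subseteq \hN$; hence, by continuity and the fact that $N$ is Hausdorff (discrete), it is constant equal to $e$ on all of $\hN$, in particular at $\beta = \alpha$. Equivalently and more slickly: $\alpha + k$ and $k$ have the same image in the finite quotient monoid $\{0,1,\dots,k\}$ of $\bbN$ because every integer $\ge k$ maps to the top element, and $\alpha$, being a limit of integers, satisfies $\alpha + k \ge k$ "in the quotient sense"; so $m^{\alpha+k} = m^k$, i.e.\ $m^\alpha m^\omega = m^\omega$. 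Since $m$ and $M$ were arbitrary, $\bV$ satisfies $x^\omega = x^\omega x^\alpha$.

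The only mild subtlety — the part I would be most careful to write out cleanly — is the continuity/density argument identifying $\hat\pi(\alpha)$ with an element of the finite cyclic monoid $N$: one must invoke that $\bbN$ is dense in $\hN$, that the morphism $\bbN \to N$ extends uniquely and continuously to $\hN \to N$, and that $N$ being finite (discrete) makes any continuous map into it locally constant, so that its value at $\alpha$ agrees with its (common) value on a neighbourhood of integers of $\bbN$. Everything else is a routine consequence of aperiodicity forcing the cyclic monoid generated by each element to be "index $k$, period $1$", for which the absorption identity $m^\omega m^\alpha = m^\omega$ is immediate.
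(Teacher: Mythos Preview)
Your proof is correct and rests on the same idea as the paper's: aperiodicity gives $x^{\omega+k} = x^\omega$ for every integer $k \ge 0$, and density of $\bbN$ in $\hN$ together with continuity pushes this to arbitrary $\alpha \in \hN$. The only difference is packaging: the paper argues in three lines at the level of identities (write $\alpha = \lim_n k_n$; since $\bV$ satisfies $x^{\omega+k_n} = x^\omega$ for all $n$, it satisfies $x^\omega x^\alpha = x^\omega$ by continuity), whereas you unpack the same continuity argument element-by-element inside a fixed finite monoid.
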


\begin{proof} Let $\alpha \in \hN$. Then $\alpha = \lim_{n \to \infty} k_n$ for some
sequence $(k_n)_{n \geq 0}$ of nonegative integers. Since $\bV$ is aperiodic, it
satisfies the identity $x^{\omega + k_n} = x^\omega$ for all $n$, and hence it also
satisfies the identity $x^\omega x^\alpha = x^\omega$.\end{proof}


\section{Renaming}\label{sec:Renaming}

In this section, we give some general results on renaming. 

Since any map may be written as the composition of an injective map with a surjective
map, one gets immediately:

\begin{lemma}\label{lem:renaming = injective and surjective}
	A class of languages is closed under renaming if and only if it is closed under
	injective and surjective renamings.
\end{lemma}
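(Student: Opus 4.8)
The plan is to exploit the factorization of an arbitrary renaming through an injective one and a surjective one, and then observe that closure under $\varphi^{-1}$ is contravariant with respect to composition. First I would recall that a renaming (length-preserving morphism) $\varphi\colon A^*\to B^*$ is completely determined by the underlying map $A\to B$ on letters; call this map $\bar\varphi$. As with any function of sets, one can factor $\bar\varphi = j\circ s$, where $s\colon A\to C$ is surjective and $j\colon C\to B$ is injective (for instance, take $C=\bar\varphi(A)$, let $s$ be $\bar\varphi$ with codomain restricted to its image, and let $j$ be the inclusion). Extending $s$ and $j$ to the free monoids $A^*\to C^*$ and $C^*\to B^*$ gives length-preserving morphisms whose composition is $\varphi$.

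The only non-formal point is that a class of languages closed under injective and surjective renamings is closed under all renamings. Suppose $\cC$ has this closure property and let $L\in\cC(B^*)$, with $\varphi=j\circ s$ as above. Since $j$ is injective and $\cC$ is closed under injective renamings, $j^{-1}(L)\in\cC(C^*)$. Since $s$ is surjective and $\cC$ is closed under surjective renamings, $s^{-1}\bigl(j^{-1}(L)\bigr)\in\cC(A^*)$. But $\varphi^{-1}(L)=(j\circ s)^{-1}(L)=s^{-1}\bigl(j^{-1}(L)\bigr)$, so $\varphi^{-1}(L)\in\cC(A^*)$, as required. The converse direction is immediate, since injective renamings and surjective renamings are in particular renamings.

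There is essentially no obstacle here: the statement is a soft one. The only thing to be mildly careful about is whether "closed under renaming'' is being read for the preimage operation $\varphi\mapsto\varphi^{-1}$ rather than the image, and whether one needs $C$ to range over arbitrary finite alphabets — it does, which is why the lemma is stated for a general class $\cC$ assigning a family $\cC(A^*)$ to each finite alphabet $A$, and the intermediate alphabet $C=\bar\varphi(A)$ is indeed finite. So I would simply record the set-theoretic factorization, note that preimage reverses composition, and conclude. The "main obstacle'', such as it is, is purely bookkeeping: making sure the factoring morphisms are genuinely length-preserving (clear, since they come from maps of letters) and that the chosen $C$ is an admissible alphabet (it is, being a finite set).
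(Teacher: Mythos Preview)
Your factorization idea is exactly the paper's approach (the paper's proof is the single sentence ``since any map may be written as the composition of an injective map with a surjective map, one gets immediately''), and your argument is internally coherent. However, you chose the wrong interpretation of ``closed under renaming.'' In this paper it means closure under \emph{direct image}: if $L\in\cC(A^*)$ and $\varphi\colon A^*\to B^*$ is a renaming, then $\varphi(L)\in\cC(B^*)$. This is clear from the surrounding context: positive varieties are already closed under inverse images of all morphisms by definition, so an extra closure condition for inverse images of renamings would be vacuous; and Propositions~\ref{prop:lp-varieties closed under renaming}, \ref{prop:shuffle}, and~\ref{prop:renaming} all manipulate $h(L)$, $\pi(L)$, etc., not preimages.

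The repair is immediate. With your factorization $\varphi=j\circ s$ ($s$ surjective onto $C=\bar\varphi(A)$, $j$ the inclusion $C\hookrightarrow B$), one has $\varphi(L)=j\bigl(s(L)\bigr)$; closure under surjective renamings gives $s(L)\in\cC(C^*)$, then closure under injective renamings gives $j(s(L))\in\cC(B^*)$. So the argument runs covariantly rather than contravariantly, but is otherwise identical to what you wrote. You even flagged this ambiguity yourself; next time, resolve it by checking how the term is used elsewhere in the paper before committing.
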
 

The next two results give a simple description of the positive $lp$-varieties
[$ld$-varieties] of languages closed under injective renaming:

\begin{proposition}\label{prop:lp-varieties closed under renaming}
The following conditions are equivalent for a positive $lp$-variety of languages
$\cV$:
\begin{conditions}
  \item \label{item:injective renaming} $\cV$ is closed under injective renaming, 
	
	\item \label{item:nonempty B*} for each alphabet $A$ and each nonempty set
	$B \subseteq A$, $B^*$ belongs to $\cV(A^*)$,

	\item \label{item:B*} for each alphabet $A$ and each set $B
	\subseteq A$, $B^*$ belongs to $\cV(A^*)$.
\end{conditions}  
\end{proposition}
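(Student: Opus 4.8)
The plan is to prove the cyclic chain $\eqref{item:injective renaming} \Rightarrow \eqref{item:nonempty B*} \Rightarrow \eqref{item:B*} \Rightarrow \eqref{item:injective renaming}$, exploiting the fact that a positive $lp$-variety is in particular a lattice of languages closed under quotients and under inverse length-preserving morphisms.

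First I would show $\eqref{item:injective renaming} \Rightarrow \eqref{item:nonempty B*}$. Fix an alphabet $A$ and a nonempty $B \subseteq A$; pick any letter $b \in B$. Since $\cV$ is a positive $lp$-variety, the language $b^*$ belongs to $\cV(b^*)$: indeed, over a one-letter alphabet $\{b\}$ the only languages are $\emptyset$ and $b^*$, and $b^* = \{b\}^*$ must be in $\cV(\{b\}^*)$ because $\cV(\{b\}^*)$ is a lattice containing the full language. Now let $B = \{b_1, \dots, b_k\}$ and, for each $i$, let $\varphi_i \colon \{b\}^* \to B^*$ be the injective renaming sending $b \mapsto b_i$. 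By closure under injective renaming, $\varphi_i(b^*) = b_i^* \in \cV(B^*)$. But $B^* = b_1^* \shu \dots \shu b_k^*$ is not directly useful since shuffle isn't assumed; instead I would argue differently. Actually the cleaner route: take the injective renaming $\iota \colon B^* \hookrightarrow A^*$; if I can first get $B^* \in \cV(B^*)$ then $\iota(B^*) = B^* \in \cV(A^*)$, so it suffices to show $B^*$ belongs to $\cV$ evaluated at its own alphabet. For this, note $\cV(B^*)$ is a lattice of languages containing $B^*$ by definition of lattice of languages, so $B^* \in \cV(B^*)$ trivially. Hence $\eqref{item:injective renaming}$ gives $\eqref{item:nonempty B*}$ immediately via the inclusion renaming $B^* \hookrightarrow A^*$ — in fact no one-letter gymnastics is needed at all.

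Next, $\eqref{item:nonempty B*} \Rightarrow \eqref{item:B*}$ only needs to handle the case $B = \emptyset$, where $B^* = \{1\}$. I would obtain $\{1\}$ by taking $A' = \{a\}$, observing $\{a\}^* \in \cV(\{a\}^*)$ by \eqref{item:nonempty B*}, and then using closure under quotients: $a^{-1}(a^*) = a^*$ is not it, but $a^{-1}(\{1\}) $ — rather, note that $\{1\} = a^* \setminus aa^*$; complement isn't available in a positive variety. The right tool is the inverse morphism: let $f \colon A^* \to \{b\}^*$ be the morphism sending every letter of $A$ to $1$ (length-preserving only if $A$ is empty, so instead take $f$ length-preserving $A^* \to B'^*$ for a singleton $B'$ disjoint issues...). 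Cleanest: $\{1\}_{A^*} = \bigcap_{a \in A} (a^{-1} \emptyset \text{ complement})$ — again complement. I would instead use that $\{1\} = \{w : |w| = 0\}$ and realize it as $f^{-1}(\{1\}_{C^*})$ where... The honest approach: $\{1\} \in \cV(A^*)$ follows because $\{1\} = \bigcap_{a\in A} \overline{aA^*}$ is not positive; so one should instead note $\{1\}$ is $A^* \setminus A^+$, handled by closure under quotient in the reverse direction, or simply invoke that $B^* = \emptyset^*$ corresponds to the syntactic stamp onto the trivial monoid, which lies in every variety of stamps — but we only have a positive $lp$-variety, whose associated class of stamps always contains the trivial stamp, hence $\{1\} \in \cV(A^*)$. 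I expect \emph{this empty-alphabet case to be the main technical obstacle}, and I would resolve it by the stamp-theoretic remark: the syntactic stamp of $\{1\}_{A^*}$ factors through the trivial monoid, which belongs to the $lp$-variety of stamps corresponding to $\cV$.

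Finally, $\eqref{item:B*} \Rightarrow \eqref{item:injective renaming}$: let $\varphi \colon A^* \to B^*$ be an injective renaming and $L \in \cV(A^*)$; I must show $\varphi(L) \in \cV(B^*)$. Since $\varphi$ is injective, $\varphi(A)$ is a subset $C \subseteq B$ with $|C| = |A|$, and $\varphi$ restricts to a bijection $A^* \to C^*$ with inverse a length-preserving morphism $\psi \colon C^* \to A^*$; extend $\psi$ to $\psi' \colon B^* \to A^*$ by sending $B \setminus C$ arbitrarily into $A$. Then $\psi'$ is a length-preserving morphism, so $\psi'^{-1}(L) \in \cV(B^*)$ by \eqref{item:inverse ld-morphism} (in its $lp$ form). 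I claim $\varphi(L) = \psi'^{-1}(L) \cap C^*$: a word $w \in B^*$ lies in the right side iff $w \in C^*$ and $\psi'(w) \in L$, i.e. $w = \varphi(\psi(w))$ with $\psi(w) \in L$, i.e. $w \in \varphi(L)$. By \eqref{item:B*}, $C^* \in \cV(B^*)$, and $\cV(B^*)$ is closed under intersection, so $\varphi(L) = \psi'^{-1}(L) \cap C^* \in \cV(B^*)$. This completes the cycle. The only delicate point throughout is the degenerate alphabets ($A$ or $B$ empty, or $A$ empty with $B$ nonempty), which I would dispatch by the same trivial-monoid observation used above.
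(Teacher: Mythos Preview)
Your arguments for $\eqref{item:injective renaming} \Rightarrow \eqref{item:nonempty B*}$ and $\eqref{item:B*} \Rightarrow \eqref{item:injective renaming}$ are correct and essentially match the paper's (the paper writes $h(L) = f^{-1}(L) \cap (h(B))^*$, which is your $\varphi(L) = \psi'^{-1}(L) \cap C^*$).

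The gap is in $\eqref{item:nonempty B*} \Rightarrow \eqref{item:B*}$. Your final ``stamp-theoretic remark'' is wrong: over a nonempty alphabet $A$, the syntactic stamp of $\{1\}$ does \emph{not} factor through the trivial monoid. Its ordered syntactic monoid is $U_1^\da = \{0 < 1\}$ (every letter maps to $0$, and the syntactic image is the upset $\{1\}$). The trivial stamp $A^* \to \{1\}$ belongs to every $lp$-variety of stamps, but it only recognises $\emptyset$ and $A^*$, never $\{1\}$. So you cannot conclude $\{1\} \in \cV(A^*)$ this way; indeed there exist positive $lp$-varieties not containing $\{1\}$ (e.g.\ the one corresponding to $\cro 1 \leq x \crf$), which is exactly why condition~\eqref{item:nonempty B*} is needed as a hypothesis.

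The paper's fix is purely language-theoretic and uses only the hypothesis~\eqref{item:nonempty B*}. If $|A| \geq 2$, write $A = B_1 \cup B_2$ as a disjoint union of two nonempty sets; then $B_1^*, B_2^* \in \cV(A^*)$ by~\eqref{item:nonempty B*}, and $\{1\} = B_1^* \cap B_2^* \in \cV(A^*)$ since $\cV(A^*)$ is a lattice. For a one-letter alphabet $\{a\}$, the inclusion $h\colon a^* \to \{a,b\}^*$ is length-preserving, and $\{1\} = h^{-1}(\{1\}) \in \cV(a^*)$ since we have just shown $\{1\} \in \cV(\{a,b\}^*)$. The empty alphabet is trivial.
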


\begin{proof} \eqref{item:injective renaming} implies \eqref{item:B*}. Suppose that
$\cV$ is closed under injective renaming. Let $B$ be a subset of an alphabet $A$.
Since $B^* \in \cV(B^*)$ and since the embedding of $B^*$ into $A^*$ is an injective
renaming, one also has $B^* \in \cV(A^*)$.

\eqref{item:B*} implies \eqref{item:nonempty B*} is trivial.

\eqref{item:nonempty B*} implies \eqref{item:B*}. We have to show that for any
alphabet $A$, $\{1\} \in \cV(A^*)$. First assume that $A$ has at least two elements.
If $A = B_1 \cup B_2$ is a partition of $A$ into two disjoint nonempty sets $B_1$ and
$B_2$, then both $B_1^*$ and $B_2^*$ are in $\cV(A^*)$, so that $\{1\} = B_1^* \cap
B_2^*$ is also in $\cV(A^*)$. Now consider a one-letter alphabet $a$ and the
two-letter alphabet $\{a, b\}$. The inclusion $h: a^* \to \{a,b\}^*$ is length
preserving and thus $\{1\} = h^{-1}(\{1\})$ is in $\cV(a^*)$. Finally, the result is
trivial if $A$ is empty.

\eqref{item:B*} implies \eqref{item:injective renaming}. Suppose that, for each
alphabet $A$ and nonempty set $B \subseteq A$, $B^* \in \cV(A^*)$. Let $h: B^* \to
A^*$ be an injective renaming. Then there is a renaming $f: A^* \to B^*$ such that $f
\circ h$ is the identity function on $B^*$. Since for any $L \subseteq B^*$, $h(L) =
{f^{-1}(L) \cap (h(B))^*}$, we conclude that $h(L) \in \cV(A^*)$ whenever $L \in
\cV(B^*)$.\end{proof}

\begin{proposition}\label{prop:ld-varieties closed under renaming}
	An $ld$-variety $\cV$ is closed under injective renaming if and only if for each
	one-letter alphabet $a$, $\{1\}$ belongs to $\cV(a^*)$.
\end{proposition}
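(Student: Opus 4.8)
The plan is to mimic the structure of the proof of Proposition~\ref{prop:lp-varieties closed under renaming}, but to exploit the fact that an $ld$-variety is closed under inverse images of length-decreasing morphisms, which is a stronger closure property than the $lp$-case. The key point is that in the $ld$-setting one can use ``erasing'' morphisms, and from this a single tiny witness — namely $\{1\}\in\cV(a^*)$ for a one-letter alphabet $a$ — already forces $\{1\}\in\cV(A^*)$ for every alphabet $A$, and hence forces all the languages $B^*$ to lie in $\cV(A^*)$.

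\textbf{Necessity.} Suppose $\cV$ is closed under injective renaming. Let $a$ be a one-letter alphabet and let $b$ be a fresh letter. The inclusion $h\colon b^*\to\{b\}^*$ is trivially an injective renaming, but more to the point: consider any two-letter alphabet $\{a,c\}$; the embedding $a^*\hookrightarrow\{a,c\}^*$ is an injective renaming, so $a^*\in\cV(\{a,c\}^*)$, and likewise $c^*\in\cV(\{a,c\}^*)$, whence $\{1\}=a^*\cap c^*\in\cV(\{a,c\}^*)$. Finally the inclusion $a^*\to\{a,c\}^*$ is length-preserving, so $\{1\}=h^{-1}(\{1\})\in\cV(a^*)$. (This is exactly the ``\eqref{item:nonempty B*} implies \eqref{item:B*}'' argument from the previous proposition, specialised to the one-letter alphabet; note that $\cV(b^*)$ always contains $b^*$ and $\emptyset$, so $b^*\in\cV(b^*)$ supplies the starting point.)

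\textbf{Sufficiency.} Suppose $\{1\}\in\cV(a^*)$ for a one-letter alphabet $a$. I first claim that $\{1\}\in\cV(A^*)$ for \emph{every} alphabet $A$. If $A$ is empty this is trivial since $A^*=\{1\}$. If $A$ is nonempty, pick a length-decreasing morphism $f\colon A^*\to a^*$ sending every letter of $A$ to the empty word; then $f^{-1}(\{1\})=A^*\in\cV(A^*)$ gives nothing, but instead take $f\colon A^*\to a^*$ sending one distinguished letter to $a$ and all others to the empty word — this is length-decreasing and $f^{-1}(\{1\})$ is the set of words over $A$ not using that distinguished letter, i.e.\ $(A\setminus\{a_0\})^*$; intersecting over all $a_0\in A$ yields $\{1\}\in\cV(A^*)$. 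Next, by Lemma~\ref{lem:renaming = injective and surjective} it suffices to handle injective renamings, and as in the previous proof it suffices to show $B^*\in\cV(A^*)$ for every $B\subseteq A$. Given such $B$, define a length-decreasing morphism $g\colon A^*\to(A\setminus B)^*$ erasing the letters of $B$ and fixing the rest; then $B^*=g^{-1}(\{1\})$, and since $\{1\}\in\cV\bigl((A\setminus B)^*\bigr)$ by the claim, closure under inverses of length-decreasing morphisms gives $B^*\in\cV(A^*)$. Then the final ``\eqref{item:B*} implies \eqref{item:injective renaming}'' step of the previous proof applies verbatim.

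\textbf{Main obstacle.} The only subtlety is the handling of the empty word and of one-letter versus multi-letter alphabets: one must be careful that the morphisms used really are length-decreasing (not merely length-non-increasing in some other sense) and that the target alphabet of each such morphism is itself an alphabet for which membership of $\{1\}$ has already been established — this is why the claim $\{1\}\in\cV(A^*)$ for all $A$ must be proved \emph{before} invoking it with target $(A\setminus B)^*$. Everything else is a routine adaptation of Proposition~\ref{prop:lp-varieties closed under renaming}.
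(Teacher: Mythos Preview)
Your argument is correct, but it takes a detour that the paper avoids. For the sufficiency direction, you first establish the intermediate claim $\{1\}\in\cV(A^*)$ for every alphabet $A$ (via the morphisms $f_{a_0}$ and an intersection over $a_0\in A$), and then use a second morphism $g\colon A^*\to (A\setminus B)^*$ to obtain $B^*=g^{-1}(\{1\})$. The paper does this in one stroke: it defines $h\colon A^*\to a^*$ by $h(b)=1$ for $b\in B$ and $h(c)=a$ for $c\in A\setminus B$; then $h$ is length-decreasing and $B^*=h^{-1}(\{1\})$ directly, with no intermediate claim needed. Your necessity direction is also correct but re-derives a special case of Proposition~\ref{prop:lp-varieties closed under renaming} rather than simply quoting it (closed under injective renaming $\Leftrightarrow$ every $B^*\in\cV(A^*)$, so take $A=\{a\}$, $B=\emptyset$).

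Two small cosmetic points: the sentence about ``$h\colon b^*\to\{b\}^*$'' is a false start that never gets used and should be deleted, and the appeal to Lemma~\ref{lem:renaming = injective and surjective} is irrelevant since you are already only treating injective renamings.
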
 

\begin{proof} Since each $ld$-variety is an $lp$-variety, Proposition
\ref{prop:lp-varieties closed under renaming} shows that $\cV$ is closed under
injective renaming if and only if, for each alphabet $A$ and each subset $B$ of $A$,
$B^*$ belongs to $\cV(A^*)$. In particular, if $\cV$ is closed under injective
renaming, then $\{1\}$ belongs to $\cV(a^*)$.

Suppose now that $\cV(a^*)$ contains $\{1\}$. Let $A$ be any alphabet and let $B$ be
a subset of $A$. The morphism $h:A^* \to a^*$ that maps each element of $B$ to $1$
and all elements of $A \setminus B$ to $a$ is length-decreasing. Since $\cV$ is an
$ld$-variety and $\{1\}$ belongs to $\cV(a^*)$, $h^{-1}(\{1\})$ also belongs to
$\cV(a^*)$. But $B^* = h^{-1}(\{1\})$, and hence $\cV(A^*)$ contains $B^*$ as
required.\end{proof}

Let $\bV$ be a variety of ordered monoids and let $\cV$ be the corresponding positive
variety of languages. A description of the positive variety of languages
corresponding to $\PdV$ was given by Pol\'{a}k \cite[Theorem 4.2]{Polak02} and by
Cano and Pin \cite{Cano03} and \cite[Proposition 6.3]{CanoPin04}. The following
stronger version\footnote{We warn the reader that a different notation was used in
\cite{CanoPin12}.} was given in \cite{CanoPin12}. For each alphabet $A$, let us denote
by $\LcV(A^*)$ [$\LscV(A^*)$] the set of all languages of $A^*$ of the form
$\varphi(K)$, where $\varphi$ is a [surjective] renaming from $B^*$ to
$A^*$, $B$~is an arbitrary finite alphabet, and $K$
is a language of $\cV(B^*)$.

\begin{theorem}\label{thm:renaming}
	The class $\LcV$ $[\LscV]$ is a positive variety of languages and the corresponding
	variety of ordered monoids is $\PdzV$ $[\PdV]$.
\end{theorem}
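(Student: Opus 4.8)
The plan is to prove the two statements in parallel, exploiting the fact that $\LcV$ and $\LscV$ are defined by taking direct and surjective images respectively, and that $\cPdz$ and $\cPd$ are related exactly the way arbitrary renamings and surjective renamings are. First I would verify that $\LcV$ (resp.\ $\LscV$) is a positive variety of languages. Closure under $\varphi^{-1}$ is the formal content of the statement ``positive variety'', so the substance is to check closure under finite union, finite intersection, quotients, and inverse morphism. For union and intersection one pulls two languages $\varphi_1(K_1)$ and $\varphi_2(K_2)$ back along a common alphabet: replacing $B_1$ and $B_2$ by a disjoint union one can assume a single renaming $\varphi\colon B^*\to A^*$, and then $\varphi(K_1)\cup\varphi(K_2)=\varphi(K_1\cup K_2)$ handles union, while for intersection one uses that $\varphi(K_1)\cap\varphi(K_2)=\varphi\big((K_1\cap f^{-1}(\text{something}))\dots\big)$ — more precisely one codes the pair of preimages by a renaming from a suitable product alphabet. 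Quotients are easy since $u^{-1}\varphi(K)=\varphi\big(\bigcup_{\varphi(w)=u} w^{-1}K\big)$ (a finite union), and similarly on the right. Inverse image: given a morphism $\alpha\colon C^*\to A^*$ and $L=\varphi(K)\in\LcV(A^*)$, one builds a pullback square producing an alphabet $D$, a renaming $\psi\colon D^*\to C^*$ and a morphism $\beta\colon D^*\to B^*$ with $\alpha\psi=\varphi\beta$, whence $\alpha^{-1}(\varphi(K))=\psi\big(\beta^{-1}(K)\big)$; here closure of $\cV$ under inverse morphism is what is used. For $\LscV$ the same arguments go through keeping all renamings surjective.

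Next I would identify the corresponding variety of ordered monoids. The cleanest route is to show that a language $L$ belongs to $\LcV(A^*)$ if and only if its ordered syntactic monoid lies in $\PdzV$ (and the surjective analogue with $\PdV$). For the ``only if'' direction, one shows that if $K$ is recognised by an ordered stamp into $M\in\bV$ and $\varphi\colon B^*\to A^*$ is a renaming, then $\varphi(K)$ is recognised by an ordered monoid built from $M$ using the powerset-of-downsets construction: concretely $\varphi(K)=\varphi\big(\varphi^{-1}\varphi(K)\big)$ and one checks $\varphi^{-1}\varphi(K)$ is still recognised in $\bV$ (inverse renaming preserves $\bV$), so it suffices to treat the case where $K$ is already a union of $\varphi$-classes; then the transition monoid of the obvious automaton for $\varphi(K)$ embeds into $\cPdz(M)$ (nondeterminism from $\varphi$ being non-injective produces sets of states, and the order on downsets matches the syntactic order). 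Using $\Pd$ surjective renamings avoid the zero, landing in $\cPd(M)$. For the ``if'' direction one notes $\PdzV$ is generated by the $\cPdz(M)$, $M\in\bV$, so by the variety theorem it is enough to exhibit, for each such $M$ and each upset $U$ of $\cPdz(M)$, a language in $\LcV$ with ordered syntactic monoid dividing $\cPdz(M)$ — and the canonical example is $\varphi(\psi^{-1}(P))$ for a suitable renaming $\varphi$ erasing/merging letters and a stamp $\psi$ into $M$; the downset structure is realised precisely because a renaming image introduces existential quantification over preimages, which is the set-union defining the product in $\cPd$.

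The main obstacle, I expect, is the bookkeeping in the ``only if'' direction: showing that the ordered syntactic monoid of $\varphi(K)$ genuinely sits inside $\PdzV$ rather than merely inside some larger power variety. The subtle point is the interaction of the order with the nondeterminism — one must check that the syntactic order on $\varphi(K)$ corresponds to inclusion of downsets and not to some coarser or finer relation, and that the empty downset (the zero) only appears when it must, which is exactly where the distinction between $\Pdz$ (all renamings) and $\Pd$ (surjective renamings) is forced, via the identity \eqref{eq:PdzV} relating $\PdzV$ to $\PdV\vee\bSlp$. A convenient device here is to factor an arbitrary renaming as a surjective renaming followed by an inclusion $B\subseteq A$, handle the surjective part with $\Pd$, and observe that adjoining absent letters multiplies the syntactic monoid by a copy of $U_1^\da$, contributing the $\bSlp$ join. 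With that reduction in hand the two halves of the theorem follow from each other, and the statement that $\LcV$, $\LscV$ are positive varieties combines with the variety theorem to pin down the corresponding varieties of ordered monoids as $\PdzV$ and $\PdV$.
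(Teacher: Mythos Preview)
The paper does not actually prove this theorem: it is quoted from the literature, with explicit references to Pol\'ak \cite[Theorem 4.2]{Polak02} and Cano--Pin \cite{Cano03,CanoPin04,CanoPin12} immediately preceding the statement. There is therefore no ``paper's own proof'' to compare your attempt against.

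That said, your sketch is broadly in line with how the result is established in those references, and the overall architecture (verify the positive-variety axioms for $\LcV$ and $\LscV$, then identify the corresponding variety of ordered monoids by showing each inclusion) is the right one. A couple of places deserve tightening. Your formula for quotients is slightly off: the correct identity is
\[
  u^{-1}\varphi(K)=\bigcup_{\varphi(w)=u}\varphi(w^{-1}K),
\]
a finite union of images (not the image of a union), which is what one needs since $\LcV(A^*)$ is only claimed to be closed under finite union of such images. For intersection, the handwave ``codes the pair of preimages by a renaming from a suitable product alphabet'' hides the real work: one must produce a single renaming $\psi\colon D^*\to A^*$ and a language $K\in\cV(D^*)$ with $\psi(K)=\varphi_1(K_1)\cap\varphi_2(K_2)$, and the fibered-product alphabet $D=\{(b_1,b_2)\in B_1\times B_2:\varphi_1(b_1)=\varphi_2(b_2)\}$ does this, but you should say so. Finally, in the ``if'' direction your phrasing is tangled: what must be shown is that every language recognised, via an upset, by some $\cPdz(M)$ with $M\in\bV$ already lies in $\LcV$; your ``exhibit a language with syntactic monoid dividing $\cPdz(M)$'' is not the same claim and would not suffice. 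The factorisation of an arbitrary renaming as surjective-then-inclusion, together with the observation that the inclusion contributes a $U_1^\da$ factor, is indeed the clean way to reduce the $\Pdz$ case to the $\Pd$ case via \eqref{eq:PdzV}.
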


\begin{corollary}\label{cor:renaming}
	A positive variety of languages $\cV$ is closed under
	\textup{[}surjective\textup{]} renaming if and only if $\bV = \PdzV$ $[\bV =
	\PdV]$.
\end{corollary}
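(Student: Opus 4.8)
The plan is to derive the corollary directly from Theorem~\ref{thm:renaming} together with Lemma~\ref{lem:renaming = injective and surjective} and the Eilenberg correspondence between varieties of ordered monoids and positive varieties of languages. First I would observe that, by the Eilenberg correspondence, a positive variety of languages $\cV$ equals $\LcV$ (respectively $\LscV$) if and only if the associated variety of ordered monoids $\bV$ equals $\PdzV$ (respectively $\PdV$), since $\LcV$ and $\LscV$ are themselves positive varieties of languages by Theorem~\ref{thm:renaming}, and the correspondence $\bW \mapsto \cW$ is a bijection. So it suffices to prove that $\cV$ is closed under [surjective] renaming if and only if $\cV = \LcV$ (respectively $\cV = \LscV$).

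For the backward direction, suppose $\cV = \LcV$. Given a renaming $\varphi: B^* \to A^*$ and $K \in \cV(B^*)$, the language $\varphi(K)$ is by definition in $\LcV(A^*) = \cV(A^*)$; hence $\cV$ is closed under renaming. The surjective case is identical, using $\LscV$ and surjective renamings. For the forward direction, suppose $\cV$ is closed under renaming. One inclusion is immediate: taking $B = A$ and $\varphi = \mathrm{id}_{A^*}$ (a surjective renaming) shows $\cV(A^*) \subseteq \LscV(A^*) \subseteq \LcV(A^*)$. For the reverse inclusion $\LcV(A^*) \subseteq \cV(A^*)$, take any $L = \varphi(K)$ with $\varphi: B^* \to A^*$ a renaming and $K \in \cV(B^*)$; closure under renaming gives $L \in \cV(A^*)$ directly. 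Thus $\cV = \LcV$, and similarly $\cV = \LscV$ in the surjective case.

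The one point requiring a little care is the surjective case: closure under \emph{surjective} renaming must be shown to imply $\LscV(A^*) \subseteq \cV(A^*)$, and here the images $\varphi(K)$ defining $\LscV$ are taken over \emph{surjective} renamings, so the argument goes through verbatim. The main (very minor) obstacle is simply making sure that in applying Theorem~\ref{thm:renaming} one uses the fact that $\LcV$ and $\LscV$ are genuine positive varieties of languages, so that the uniqueness half of the Eilenberg bijection licenses the passage between "$\cV$ has associated variety $\PdzV$" and "$\cV = \LcV$"; once that is in place the corollary is a formal consequence. I would also note in passing that Lemma~\ref{lem:renaming = injective and surjective} is not strictly needed here, since the definition of $\LcV$ already ranges over arbitrary (not necessarily injective or surjective) renamings, but it could be invoked to emphasize that "closed under renaming" is the natural two-sided closure condition.
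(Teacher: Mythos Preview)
Your proposal is correct and is exactly the intended derivation: the paper states the corollary without proof, treating it as an immediate consequence of Theorem~\ref{thm:renaming} together with the Eilenberg bijection, and you have spelled out precisely that argument. Your observation that Lemma~\ref{lem:renaming = injective and surjective} is not actually needed is also accurate.
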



\section{Commutative varieties}\label{sec:Commutative varieties}

A stamp $\varphi: A^* \to M$ is said to be \emph{commutative} if $M$ is commutative.
An $ld$-variety is \emph{commutative} if all its stamps are
commutative. A stamp $\varphi: A^* \to M$ is called \emph{monogenic}
if $A$ is a singleton alphabet.

\begin{proposition}\label{prop:commutative ld-varieties}  
	Every commutative $ld$-variety of \textup{[}ordered\textup{]} stamps is generated
	by its monogenic \textup{[}ordered\textup{]} stamps.
\end{proposition}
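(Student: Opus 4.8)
The plan is to show that a commutative $ld$-variety $\bV$ of (ordered) stamps is contained in — and hence equal to — the $ld$-variety generated by its monogenic members. So let $\bW$ denote the $ld$-variety of (ordered) stamps generated by the monogenic stamps in $\bV$. Clearly $\bW \subseteq \bV$, so it suffices to prove $\bV \subseteq \bW$; equivalently, given a commutative stamp $\varphi\colon A^* \to M$ in $\bV$, we want to recover it from monogenic stamps in $\bV$ using $ld$-projections, $ld$-inclusions and finite restricted direct products. First I would reduce to the case where $A = \{a_1, \dots, a_k\}$ is finite (which it is, by definition of stamp) and, for each letter $a_i$, consider the restriction $\varphi_i$ of $\varphi$ to the submonoid generated by $\varphi(a_i)$: more precisely, let $g_i\colon \{a\}^* \to A^*$ be the length-decreasing (in fact length-preserving) morphism sending $a \mapsto a_i$, and let $\varphi_i\colon \{a\}^* \to M_i$ be the stamp obtained by corestricting $\varphi \circ g_i$ to its image $M_i = \langle \varphi(a_i)\rangle$. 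The pair $(g_i, \iota_i)$, where $\iota_i\colon M_i \hookrightarrow M$ is the inclusion, is an $ld$-morphism, and it exhibits $\varphi_i$ as obtained from $\varphi$ by an $ld$-projection followed by an $ld$-inclusion (the image of $\varphi \circ g_i$ being $M_i$); hence each $\varphi_i$ is monogenic and lies in $\bV$, so $\varphi_i \in \bW$.

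The heart of the argument is then to reconstruct $\varphi$ from the $\varphi_i$. The natural candidate is the restricted direct product $\psi = \varphi_1 \times \cdots \times \varphi_k$ with domain $A^*$, defined by $\psi(a_j) = (\varphi_1(a_j), \dots, \varphi_k(a_j))$ where $\varphi_i(a_j) = \varphi(a_i)^{\delta_{ij}}$ when we identify $\{a\}^*$'s generator with each $a_i$ — that is, $\varphi_i$ must be read as a stamp on $A^*$ via $g_i$. Concretely, set $\psi_i = \varphi_i \circ (\text{the morphism } A^* \to \{a\}^* \text{ collapsing all letters but } a_i)$; each $\psi_i$ is then a stamp on $A^*$ obtained from $\varphi_i$ by an $ld$-projection (the collapsing morphism is length-decreasing and surjective onto $\{a\}^*$), so $\psi_i \in \bW$, and $\psi = \psi_1 \times \cdots \times \psi_k \in \bW$ by closure under finite restricted direct products. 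The key claim is that $\varphi$ factors through $\psi$, i.e. there is a morphism $\beta\colon \Ima(\psi) \to M$ with $\varphi = \beta \circ \psi$. This is where commutativity is essential: any word $u \in A^*$ is $\sim$-equivalent, in the commutative monoid $M$, to $a_1^{|u|_{a_1}} \cdots a_k^{|u|_{a_k}}$, so $\varphi(u) = \prod_i \varphi(a_i)^{|u|_{a_i}} = \prod_i \pi_i(\psi(u))$ where $\pi_i$ reads off the $i$-th coordinate as an element of $M_i \subseteq M$; thus $\varphi(u)$ depends only on $\psi(u)$, giving a well-defined $\beta$. In the ordered case one checks additionally that $\beta$ is monotone: if $\psi(u) \leq \psi(v)$ coordinatewise then $\pi_i(\psi(u)) \leq \pi_i(\psi(v))$ in $M_i$, hence in $M$, and multiplying these inequalities (using compatibility of $\leq$ with the product in the ordered monoid $M$) yields $\varphi(u) \leq \varphi(v)$. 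Then $(\mathrm{id}_{A^*}, \beta)$ is an $ld$-morphism which is an $ld$-projection, so $\varphi \in \bW$.

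I expect the main obstacle to be bookkeeping rather than conceptual: one must be careful about the three times the generator of $\{a\}^*$ gets reinterpreted (as $a_i$ inside $A^*$, as a coordinate of a tuple, and as an element of $M$), and about checking that each intermediate map really is a legitimate $ld$-projection or $ld$-inclusion in the precise sense of Section~\ref{subsec:Varieties} — in particular that corestricting to images does not leave the class, which is exactly what closure under $ld$-projections and $ld$-inclusions guarantees. A secondary point worth stating carefully is that the monotonicity of $\beta$ in the ordered case uses only the order on $M$ (not the syntactic order of any language), so no profinite machinery or Reiterman-type argument is needed; the proof is entirely finitary and elementary once the factorization $\varphi = \beta \circ \psi$ is in hand.
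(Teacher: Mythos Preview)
Your argument is correct and is essentially the paper's proof: extract from $\varphi$ the one-generator pieces, form their restricted direct product over $A^*$, and project onto $M$ via the commutative multiplication map $(m_i)_i \mapsto \prod_i m_i$ (your $\psi_i$ is the paper's $\gamma_{a_i}$, your $\beta$ is the paper's $\alpha$). One label to fix: the passage from $\varphi_i$ to $\psi_i$ is via closure under $ld$-\emph{inclusions}, not $ld$-projections --- the pair $(h_i,\mathrm{id}_{M_i})$ is an $ld$-morphism from $\psi_i$ to $\varphi_i$ whose monoid component is injective, and closure under $ld$-inclusions places the \emph{source} in $\bW$ once the target is; the surjectivity of $h_i$ is irrelevant here.
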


\begin{proof} We first give the proof in the unordered case. Let $\bV$ be a
commutative $ld$-variety of stamps and let $\varphi:A^* \to M$ be a stamp of $\bV$.
For each $a \in A$, denote by $M_a$ the submonoid of $M$ generated by $\varphi(a)$
and let $\gamma_a: A^* \to M_a$ be the stamp defined by $\gamma_a(a) = \varphi(a)$
and $\gamma_a(c) = 1$ for $c \neq a$. Let $\bW$ be the $ld$-variety of stamps
generated by the stamps $\gamma_a$, for $a \in A$. We claim that $\bV = \bW$.

Let $\pi_a: A^* \to A^*$ be the length-decreasing morphism defined by $\pi_a(a) = a$
and $\pi_a(c) = 1$ for $c \neq a$. Denoting by $\iota_a$ the natural embedding from
$M_a$ into $M$, one gets the
following commutative diagram:
\begin{center}
	\thinlines
	\unitlength=4pt
	\begin{picture}(16,17)(0,-2)
		\gasset{Nframe=n,Nw=6,Nh=6,Nmr=3,curvedepth=0}\small
		\node(A)(0,12){$A^*$}
		\node(B)(16,12){$A^*$}
		\node(Ma)(0,0){$M_a$}
		\node(M)(16,0){$M$}
		\drawedge(A,B){$\pi_a$}
		\drawedge[ELside=r](A,Ma){$\gamma_a$}
		\drawedge(B,M){$\varphi$}
		\drawedge(Ma,M){$\iota_a$}
	\end{picture}
\end{center}
Therefore $(\pi_a, \iota_a)$ is an $ld$-inclusion and each stamp
$\gamma_a$ belongs to $\bV$. Thus $\bW \subseteq \bV$. 

The restricted product $\gamma$ of the stamps $\gamma_a$ also belongs to $\bW$. Note
that $\gamma$ is a surjective morphism from $A^*$ onto $\prod_{a\in A}M_a$. Moreover,
the function $\alpha:\prod_{a \in A}M_a \to M$ which maps each family $(m_a)_{a \in
A}$ onto the product $\prod_{a \in A}m_a$ is a surjective morphism. Since $\alpha
\circ \gamma = \varphi$, the stamp $\varphi$ belongs to $\bW$. Thus $\bV \subseteq
\bW$. This proves the claim and the proposition.

In the ordered case, each $M_a$ is an ordered submonoid of $M$ and thus each
$\gamma_a$ is an ordered stamp. Since $\iota_a$ clearly preserves the order, the same
argument shows that each $\gamma_a$ is in $\bV$ and thus $\bW \subseteq \bV$. For the
reverse inclusion, one basically needs to observe that $\prod_{a \in A}M_a$ is
equipped with the product order, and that the map $\alpha$ preserves the order, since
$M$ is an ordered monoid.\end{proof}

A similar but simpler proof would give the following result:

\begin{proposition}\label{prop:commutative varieties}  
	Every commutative variety of \textup{[}ordered\textup{]} monoids is generated by its
	monogenic \textup{[}ordered\textup{]} monoids.
\end{proposition}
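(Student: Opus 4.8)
The plan is to mimic the proof of Proposition~\ref{prop:commutative ld-varieties}, but in the simpler setting of varieties of (ordered) monoids, where we no longer have to carry along the syntactic stamp structure or worry about length-decreasing morphisms. First I would fix a commutative variety of monoids $\bV$ and a monoid $M \in \bV$, with the goal of exhibiting $M$ as a quotient of a submonoid of a finite direct product of monogenic monoids lying in $\bV$. Since $M$ is finite, it is generated by finitely many elements $m_1, \ldots, m_k \in M$; for each $i$ let $M_i$ be the (monogenic) submonoid of $M$ generated by $m_i$. Being a submonoid of $M \in \bV$, each $M_i$ belongs to $\bV$, and so does the finite product $N = \prod_{i=1}^k M_i$.

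Next I would define the morphism $\alpha : N \to M$ by $\alpha\bigl((x_1, \ldots, x_k)\bigr) = x_1 \dotsm x_k$. Commutativity of $M$ is exactly what is needed to check that $\alpha$ is a monoid morphism (the product of two such tuples maps to $x_1 y_1 \dotsm x_k y_k$, which equals $(x_1 \dotsm x_k)(y_1 \dotsm y_k)$ only because $M$ is commutative). Since $m_1, \ldots, m_k$ generate $M$ and each $m_i = \alpha$ of the tuple that is $m_i$ in coordinate $i$ and $1$ elsewhere, $\alpha$ is surjective. Hence $M$ is a quotient of $N$, and therefore $M$ lies in the variety generated by the monogenic monoids $M_i \in \bV$. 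This shows $\bV$ is contained in the variety generated by its monogenic members; the reverse inclusion is trivial since monogenic monoids of $\bV$ are in $\bV$.

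In the ordered case, the argument is the same: each $M_i$ is an ordered submonoid of $M$, hence an ordered monogenic monoid in $\bV$; the product $N$ carries the componentwise order and lies in $\bV$; and $\alpha$ is order-preserving because multiplication in the ordered monoid $M$ is monotone in each argument, so $(x_1, \ldots, x_k) \le (y_1, \ldots, y_k)$ in $N$ gives $x_1 \dotsm x_k \le y_1 \dotsm y_k$ in $M$. Thus $\alpha$ is a surjective morphism of ordered monoids, exhibiting $M$ as an (ordered) quotient of $N$.

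There is essentially no obstacle here — the proof is routine given Proposition~\ref{prop:commutative ld-varieties}, and indeed the paper already signals this by saying ``a similar but simpler proof.'' The only point requiring any care is the verification that $\alpha$ is a morphism (and order-preserving in the ordered case), which is precisely where commutativity (and monotonicity of the product) enters; everything else is bookkeeping about generators and closure of varieties under submonoids, products, and quotients.
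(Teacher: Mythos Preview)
Your proposal is correct and is exactly the ``similar but simpler'' adaptation the paper alludes to: you replace the stamps $\gamma_a$ and the restricted product of Proposition~\ref{prop:commutative ld-varieties} by the monogenic submonoids $M_i$ and their direct product, and the multiplication map $\alpha$ plays the same role in both proofs. The only point where you genuinely need to add something beyond Proposition~\ref{prop:commutative ld-varieties} is the choice of a finite generating set for $M$ (there is no canonical alphabet here), and you handle that correctly.
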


Proposition \ref{prop:commutative ld-varieties} has an interesting consequence in
terms of languages. Equivalently, a language is \textit{commutative} if its syntactic
monoid is commutative.

\begin{corollary}\label{cor:commutative ld-varieties}
	Let $\cV_1$ and $\cV_2$ be two positive $ld$-varieties of commutative languages.
	Then $\cV_1 \subseteq \cV_2$ if and only if $\cV_1(a^*) \subseteq \cV_2(a^*)$.
\end{corollary}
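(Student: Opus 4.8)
The plan is to translate the statement into the language of ordered stamps and reduce it to Proposition~\ref{prop:commutative ld-varieties}. The implication from left to right is trivial, so assume $\cV_1(a^*)\subseteq\cV_2(a^*)$ for a one-letter alphabet $a$, and let $\bV_1,\bV_2$ be the $ld$-varieties of ordered stamps corresponding to $\cV_1,\cV_2$ under the bijection recalled in Section~\ref{subsec:Varieties}. It suffices to prove $\bV_1\subseteq\bV_2$, since then $\cV_1\subseteq\cV_2$ follows from that same bijection.

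First I would check that $\bV_1$ and $\bV_2$ are commutative $ld$-varieties, so that Proposition~\ref{prop:commutative ld-varieties} applies: the class of all commutative ordered stamps is itself an $ld$-variety (a restricted product of commutative monoids is commutative; an $ld$-projection turns a commutative stamp into a commutative one; an $ld$-inclusion has its source monoid embedded in its target, hence commutative whenever the latter is), and this $ld$-variety contains the ordered syntactic stamp of every language of $\cV_i$ because those languages are commutative; therefore it contains $\bV_i$. By Proposition~\ref{prop:commutative ld-varieties}, $\bV_1$ is generated by its monogenic ordered stamps, so it is enough (as $\bV_2$ is an $ld$-variety) to show that every monogenic ordered stamp $\varphi\colon a^*\to M$ belonging to $\bV_1$ belongs to $\bV_2$. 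For this I would use the standard fact built into the correspondence of Section~\ref{subsec:Varieties}: a stamp lies in an $ld$-variety of ordered stamps if and only if every language it recognises lies in the associated positive $ld$-variety (the ordered syntactic stamp of any language recognised by $\varphi$ is $\varphi$ followed by a quotient of ordered monoids, and conversely $\varphi$ is a finite restricted product of such syntactic stamps). Applying this to $\varphi\in\bV_1$, every language recognised by $\varphi$ is a language of $a^*$ lying in $\cV_1(a^*)$, hence in $\cV_2(a^*)$ by hypothesis, so $\varphi\in\bV_2$, which completes the argument.

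The heart of the proof is really Proposition~\ref{prop:commutative ld-varieties}; the only points that require care here are bookkeeping ones, namely the verification that $\bV_1$ and $\bV_2$ are commutative, and the precise use of the stamp-to-language dictionary for non-syntactic stamps (membership of a stamp in $\bV$ in terms of the languages it recognises). An alternative, more hands-on route would be to take $L\in\cV_1(A^*)$, follow the construction in the proof of Proposition~\ref{prop:commutative ld-varieties} to write its ordered syntactic stamp as built, via restricted products and $ld$-morphisms, from monogenic ordered stamps of $\bV_1$, and then observe that each such monogenic stamp recognises only languages of $\cV_1(a^*)\subseteq\cV_2(a^*)$; but routing through the generation statement and the dictionary is cleaner.
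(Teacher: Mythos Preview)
Your proposal is correct and follows the route the paper intends: the corollary is stated without proof, as an immediate consequence of Proposition~\ref{prop:commutative ld-varieties} together with the stamp--language correspondence of Section~\ref{subsec:Varieties}, and your argument spells out precisely this reduction. One small imprecision: you write that ``$\varphi$ is a finite restricted product of such syntactic stamps'', whereas what is true (and sufficient) is that $\varphi$ admits an $ld$-inclusion into such a restricted product; this does not affect the validity of the argument.
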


Corollary \ref{cor:commutative ld-varieties} shows that a positive commutative
$ld$-variety of languages is entirely determined by its languages on a one-letter
alphabet. Here is a more explicit version of this result.

\begin{proposition}\label{prop:cV}
	Let $\cV$ be a commutative positive $ld$-variety of languages. Then for each
	alphabet $A = \{a_1,\ldots,a_k\}$, $\cV(A^*)$ consists of all finite unions of
	languages of the form $L_1 \shu {}\dotsm{} \shu L_k$ where, for $1 \leq i \leq k$,
	$L_i \in \cV(a_i^*)$.
\end{proposition}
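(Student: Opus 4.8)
The plan is to establish the two inclusions separately; the inclusion stating that every finite union of shuffles of the indicated shape lies in $\cV(A^*)$ is easy, and the reverse is the substantial part. The first thing to record is the elementary identity linking shuffles over pairwise disjoint one-letter alphabets with inverse morphisms: for $1\le i\le k$ write $\rho_i\colon A^*\to a_i^*$ for the length-decreasing morphism erasing every letter other than $a_i$; then for arbitrary languages $L_i\subseteq a_i^*$ one has $L_1\shu\cdots\shu L_k=\bigcap_{i=1}^k\rho_i^{-1}(L_i)$, since a word of $A^*$ lies in the shuffle exactly when, for each $i$, its $a_i$-projection lies in $L_i$ (here it matters that $A=\{a_1,\dots,a_k\}$). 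As $\cV$ is a positive $ld$-variety, $L_i\in\cV(a_i^*)$ gives $\rho_i^{-1}(L_i)\in\cV(A^*)$, and $\cV(A^*)$ is closed under finite intersection and finite union; this settles the easy inclusion.

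For the converse, fix $L\in\cV(A^*)$, let $\eta\colon A^*\to M$ be its ordered syntactic stamp with syntactic image $P=\eta(L)$ (an upset of $M$), and let $\bV$ be the $ld$-variety of ordered stamps corresponding to $\cV$, so that $\eta\in\bV$; since $\cV$ is commutative, $M$ is a commutative ordered monoid. For each $i$ let $M_i$ be the submonoid of $M$ generated by $\eta(a_i)$, carrying the induced order, and let $\gamma_i\colon a_i^*\to M_i$ be the ordered stamp with $\gamma_i(a_i)=\eta(a_i)$. The key preliminary step is to check that each $\gamma_i\in\bV$. This is proved exactly as in Proposition~\ref{prop:commutative ld-varieties}: the length-decreasing morphism $A^*\to A^*$ erasing all letters $\ne a_i$, paired with the order embedding $M_i\hookrightarrow M$, is an $ld$-inclusion from the stamp $a_i\mapsto\eta(a_i)$, $a_j\mapsto 1$ $(j\ne i)$ into $\eta$; composing with the $ld$-inclusion given by the length-preserving inclusion $a_i^*\hookrightarrow A^*$ and $\mathrm{id}_{M_i}$, and using that $\bV$ is closed under $ld$-inclusions, yields $\gamma_i\in\bV$. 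Consequently, for every upset $U$ of $M_i$ the language $\gamma_i^{-1}(U)$ lies in $\cV(a_i^*)$, because the ordered syntactic stamp of $\gamma_i^{-1}(U)$ is obtained from $\gamma_i$ by an $ld$-projection and $\bV$ is closed under $ld$-projections.

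Next I would assemble the decomposition. Let $\gamma\colon A^*\to M_1\times\cdots\times M_k$ be the morphism with $\gamma(a_j)=(1,\dots,\eta(a_j),\dots,1)$, the $j$-th coordinate being the nontrivial one, let $\pi_i$ be the $i$-th projection of the product, and let $\alpha\colon M_1\times\cdots\times M_k\to M$ send $(m_1,\dots,m_k)$ to $m_1\cdots m_k$; because $M$ is commutative, $\alpha$ is a morphism of ordered monoids. A check on the generators gives $\pi_i\circ\gamma=\gamma_i\circ\rho_i$ and $\alpha\circ\gamma=\eta$, whence $L=\gamma^{-1}(\alpha^{-1}(P))$ with $\alpha^{-1}(P)$ an upset of the finite poset $M_1\times\cdots\times M_k$. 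Writing this upset as the (finite) union of the principal upsets of its elements, and using that the principal upset of an element of a product poset is the product of the coordinate principal upsets, one obtains $\alpha^{-1}(P)=\bigcup_{j=1}^N U_1^{(j)}\times\cdots\times U_k^{(j)}$ with each $U_i^{(j)}$ an upset of $M_i$. Taking $\gamma$-preimages, using $\pi_i\circ\gamma=\gamma_i\circ\rho_i$, and applying the identity from the first paragraph in the other direction, this becomes
$$L=\bigcup_{j=1}^N\ \bigcap_{i=1}^k\rho_i^{-1}\bigl(\gamma_i^{-1}(U_i^{(j)})\bigr)=\bigcup_{j=1}^N L_1^{(j)}\shu\cdots\shu L_k^{(j)},\qquad L_i^{(j)}:=\gamma_i^{-1}(U_i^{(j)})\in\cV(a_i^*),$$
which is precisely the asserted form.

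The main obstacle is the preliminary claim of the second paragraph, that each $\gamma_i$ belongs to the $ld$-variety $\bV$: this is what forces the one-letter factors $L_i^{(j)}$ into $\cV(a_i^*)$, and it is here that the notions of $ld$-inclusion and $ld$-projection and the restriction-to-a-one-letter-subalphabet device from Proposition~\ref{prop:commutative ld-varieties} are really used. The remaining ingredients — the shuffle/inverse-morphism identity and the fact that an upset of a finite product of posets is a finite union of products of upsets — are routine.
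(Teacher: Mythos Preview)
Your argument is correct, and it takes a genuinely different route from the paper's. The paper defines $\cW(A^*)$ to be the set of finite unions of shuffles $L_1\shu\cdots\shu L_k$ with $L_i\in\cV(a_i^*)$, proves directly (Lemma~\ref{lem:W}) that $\cW$ is itself a commutative positive $ld$-variety by checking closure under intersection, quotients, and inverses of length-decreasing morphisms via explicit shuffle identities (one of them quoted from an external reference), and then invokes Corollary~\ref{cor:commutative ld-varieties} to conclude $\cV=\cW$ from the trivial equality $\cV(a^*)=\cW(a^*)$. Your proof bypasses Lemma~\ref{lem:W} entirely: the easy inclusion comes from the single identity $L_1\shu\cdots\shu L_k=\bigcap_i\rho_i^{-1}(L_i)$, and for the hard inclusion you work on the algebraic side, pushing the syntactic stamp of $L$ through the product $\prod_i M_i$, decomposing the relevant upset as a finite union of boxes, and pulling back. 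Both proofs ultimately lean on the device of Proposition~\ref{prop:commutative ld-varieties} (the paper through Corollary~\ref{cor:commutative ld-varieties}, you through the claim that each $\gamma_i\in\bV$), but yours trades the language-level closure verifications of Lemma~\ref{lem:W} for the Eilenberg correspondence and a bit of ordered-monoid bookkeeping. Your approach is arguably more transparent about \emph{why} the decomposition exists, and it gives the shuffle factors explicitly as $\gamma_i^{-1}(U_i)$; the paper's approach is more self-contained on the language side and yields Lemma~\ref{lem:W} as a byproduct, which has independent interest. One minor simplification: you can exhibit the $ld$-inclusion from $\gamma_i$ to $\eta$ in one step, pairing the inclusion $a_i^*\hookrightarrow A^*$ with $M_i\hookrightarrow M$, rather than composing two.
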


\begin{proof} Let $A = \{a_1,\ldots,a_k\}$ be an alphabet. Let $\cW(A^*)$
consist of all finite unions of languages of the form $L_1 \shu {}\dotsm{}
\shu L_k$ where, for $1 \leq i \leq k$, $L_i \in \cV(a_i^*)$. Let us 
first prove a lemma.

\begin{lemma}\label{lem:W}
	The class $\cW$ is a commutative positive $ld$-variety of languages.
\end{lemma}

\begin{proof} By construction, every language of $\cW$ is commutative. Furthermore,
$\cW(A^*)$ is closed under union. To prove that $\cW(A^*)$ is closed under
intersection, it suffices to show that the intersection of any two languages $L =
L_1 \shu {}\dotsm{} \shu L_k$ and $L' = L'_1 \shu {}\dotsm{} \shu L'_k$ with $L_i,
L_i' \in \cV(a_i^*)$ is in $\cW(A^*)$. We claim that
\begin{equation}\label{eq:intersection shuffle}
  L \cap L' = {(L_1 \cap L_1')} \shu {}\dotsm{} \shu {(L_k \cap L_k')}
\end{equation}
Let $R$ be the right hand side of \eqref{eq:intersection shuffle}. The inclusion $R
\subseteq L \cap L'$ is clear. Moreover, if $u \in L \cap L'$, then $u \in (a_1^{n_1}
\shu {}\dotsm{} \shu a_k^{n_k}) \cap (a_1^{n'_1} \shu {}\dotsm{} \shu a_k^{n'_k})$, with
$a_i^{n_i} \in L_i$ and $a_i^{n'_i} \in L'_i$ for $1 \leq i \leq k$. This forces $n_i =
n'_i$ and hence $u \in R$, which proves the claim.

Let us prove that $\cW(A^*)$ is closed under quotient by any word $u$. Setting $n_i =
|u|_{a_i}$ for $1 \leq i \leq k$, it suffices to observe that
$$
	u^{-1}(L_1 \shu {}\dotsm{} \shu L_k) = (a_1^{n_1})^{-1}L_1 \shu
	{}\dotsm{} \shu {(a_k^{n_k})^{-1}L_k}
$$
Finally, let $\alpha:B^* \to A^*$ be a length-decreasing morphism.
It is proved in \cite[Proposition 
1.1]{BerstelBoassonCartonPinRestivo10} that
\begin{equation}\label{eq:inverse morphism shuffle}
	\alpha^{-1}(L_1 \shu {}\dotsm{} \shu L_k) = \alpha^{-1}(L_1) \shu
	{}\dotsm{} \shu \alpha^{-1}(L_k)
\end{equation}
It follows that $\cW$ is closed under inverses of $ld$-morphisms, 
which concludes the proof.\end{proof}

\noindent Let us now come back to the proof of Proposition \ref{prop:cV}. Since $\cW$
is a commutative positive $ld$-variety by Lemma \ref{lem:W}, it suffices to prove, by
Proposition \ref{prop:commutative ld-varieties}, that $\cV(a^*) = \cW(a^*)$ for each
one-letter alphabet $a$. But this follows from the definition of $\cW$.\end{proof}

Proposition \ref{prop:cV} has an interesting consequence.

\begin{theorem}\label{thm:ld varieties are varieties}
	Every commutative positive $ld$-variety of languages is a positive variety of
	languages.
\end{theorem}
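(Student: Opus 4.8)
The plan is to leverage Proposition \ref{prop:cV}, which gives an explicit shuffle-decomposition of $\cV(A^*)$ in terms of the one-letter pieces $\cV(a_i^*)$. A positive $ld$-variety is already a positive $lp$-variety, so it is closed under inverses of length-preserving morphisms; to upgrade it to a genuine positive variety, the only thing missing is closure under inverses of \emph{arbitrary} morphisms $\varphi\colon A^*\to B^*$. Since every morphism factors as a length-preserving morphism followed by... no — more usefully, an arbitrary morphism can be analysed letter by letter: I would first reduce to understanding $\varphi^{-1}$ on a single generating shuffle term $L_1\shu\dotsm\shu L_k$ with $L_i\in\cV(b_i^*)$, where $B=\{b_1,\dots,b_k\}$, using that $\varphi^{-1}$ commutes with finite unions (so the general case follows once generators are handled) and with finite intersections.

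The key computation is a formula analogous to \eqref{eq:inverse morphism shuffle} but for arbitrary (not merely length-decreasing) morphisms. For each letter $a\in A$, $\varphi(a)$ is some word in $B^*$; write $n_{i}(a)=|\varphi(a)|_{b_i}$ for its content. The point is that membership of a word $w\in A^*$ in $\varphi^{-1}(L_1\shu\dotsm\shu L_k)$ depends only on the Parikh image of $\varphi(w)$, namely on the tuple $\bigl(\sum_{a}n_i(a)|w|_a\bigr)_{i}$, because each $L_i$ is commutative. So I would show that $\varphi^{-1}(L_1\shu\dotsm\shu L_k)$ is a finite union of shuffle terms $K_1\shu\dotsm\shu K_m$ over the alphabet $A=\{a_1,\dots,a_m\}$ with each $K_j\in\cV(a_j^*)$. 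To get the $K_j$'s in $\cV(a_j^*)$, I would use that on one-letter alphabets each $\cV(a^*)$ is closed under the operations needed: quotients, Boolean operations, and — crucially — inverses of the monogenic morphisms $a^*\to b^*$, $a\mapsto b^{n}$, which \emph{are} length-preserving precisely when $n=1$ but are only length-non-decreasing when $n\ge 1$. This is the subtle point: we need $\cV(a^*)$ closed under $L\mapsto\{a^k : a^{nk}\in L\}$, i.e. under inverses of the maps $a\mapsto b^n$.

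The step I expect to be the main obstacle is exactly establishing that each one-letter class $\cV(a^*)$ is closed under these "stretching" operations $L\mapsto\{a^{k}:a^{nk}\in L\}$ for $n\ge 1$. This does not follow from the $ld$-variety axioms alone, since $a\mapsto b^n$ is length-increasing for $n>1$. The way around it is to realise that such a $K$ can be recovered \emph{inside a larger alphabet} using only $ld$-operations: introduce fresh letters, use a length-decreasing morphism to project, and use closure under quotients and Boolean operations, mirroring the trick in the proof of Proposition \ref{prop:lp-varieties closed under renaming}; alternatively, pass through the algebraic side, noting that by Corollary \ref{cor:commutative ld-varieties} and Proposition \ref{prop:commutative varieties} the monogenic ordered monoids of the $ld$-variety already form a variety of monogenic ordered monoids, which automatically yields closure under all monoid morphisms between one-letter alphabets. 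Once that closure is in hand, the decomposition formula for $\varphi^{-1}$ of a shuffle term, combined with Proposition \ref{prop:cV}, puts $\varphi^{-1}(L)$ back into $\cV(A^*)$ for every $L\in\cV(B^*)$, giving condition \ref{item:inverse morphism} and hence the theorem.
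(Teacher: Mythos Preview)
Your outline tracks the paper's strategy: use Proposition~\ref{prop:cV} to reduce to shuffle terms and analyse $\varphi^{-1}$ on those. You also correctly isolate the crux --- closure of the one-letter classes under the ``stretching'' maps $a\mapsto b^{n}$ with $n>1$. But the proposal stops short of actually resolving that obstacle. Route~(b) is essentially circular: proving that the monogenic ordered monoids of a commutative $ld$-variety of stamps already form a variety of ordered monoids is equivalent in strength to the theorem itself. Route~(a) gestures at the right idea (``introduce fresh letters'') but the mention of a ``length-decreasing morphism to project'' is off; no length-decreasing map is needed, and you never write down the formula that makes the trick work.

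Here is what the paper does and what you are missing. First, instead of treating an arbitrary $\varphi\colon A^*\to B^*$, the paper uses Corollary~\ref{cor:commutative ld-varieties} to reduce to a one-letter \emph{domain}: it suffices that $\varphi^{-1}(L)\in\cV(a^*)$ for every $\varphi\colon a^*\to A^*$ and $L\in\cV(A^*)$. Second --- and this is the key step you are lacking --- it factors such a $\varphi$ through a fresh alphabet with \emph{distinct} letters: if $\varphi(a)=a_1\dotsm a_k$, set $C=\{c_1,\dots,c_k\}$, let $\beta\colon a^*\to C^*$ send $a\mapsto c_1\dotsm c_k$, and let $\alpha\colon C^*\to A^*$ be the length-preserving map $c_i\mapsto a_i$, so that $\varphi=\alpha\circ\beta$. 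Then $K=\alpha^{-1}(L)\in\cV(C^*)$, and Proposition~\ref{prop:cV} expresses $K$ as a finite union of shuffles $L_1\shu\dotsm\shu L_k$ with $L_i\in\cV(c_i^*)$. The punchline is the explicit identity
\[
\beta^{-1}(L_1\shu\dotsm\shu L_k)=\beta_1^{-1}(L_1)\cap\dotsm\cap\beta_k^{-1}(L_k),
\]
where each $\beta_i\colon a^*\to c_i^*$ is the \emph{length-preserving} renaming $a\mapsto c_i$. This holds precisely because the $c_i$ are distinct, so $\beta(a^r)=(c_1\dotsm c_k)^r$ contributes exactly $r$ copies of each $c_i$, collapsing the shuffle condition to a conjunction of one-letter conditions. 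Each $\beta_i^{-1}(L_i)$ lies in $\cV(a^*)$, and hence so does $\varphi^{-1}(L)=\beta^{-1}(K)$. In short, the non-length-preserving map $\beta$ never needs an inverse-image axiom of its own: the fresh-letter factorisation together with the displayed intersection formula reduces everything to length-preserving pullbacks, exactly the step your sketch leaves open.
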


\begin{proof} Let $\cV$ be a commutative positive $ld$-variety of languages and let
$\cW$ be the positive variety of languages generated by $\cV$. We claim that $\cV =
\cW$. Since $\cV$ is contained in $\cW$, Corollary \ref{cor:commutative ld-varieties}
shows that it suffices to prove that $\cW(a^*) \subseteq \cV(a^*)$ for each
one-letter alphabet $a$. Since inverses of morphisms commute with Boolean operations
and quotients, it suffices to prove that if $\varphi:a^* \to A^*$ is a morphism and
$L \in \cV(A^*)$, then $\varphi^{-1}(L) \in \cV(a^*)$. 

Let $\varphi(a) = a_1 {}\dotsm{} a_k$, where $a_1, \ldots, a_k$ are letters of the
alphabet $A$. Setting $C = \{c_1, \ldots, c_k\}$, where $c_1, \ldots, c_k$ are
distinct letters, one may write $\varphi$ as $\alpha \circ \beta$ where $\beta:a^*
\to C^*$ is defined by $\beta(a) = c_1 {}\dotsm{} c_k$ and $\alpha:C^* \to A^*$ is
defined by $\alpha(c_i) = a_i$ for $1 \leq i \leq k$.
\begin{center}
	\thinlines
	\unitlength=4pt
	\begin{picture}(20,10)(0,-7)
		\gasset{Nframe=n,Nw=4,Nh=4,Nmr=2,curvedepth=0}\small
		\node(Aa)(0,0){$a^*$}
		\node(C)(10,0){$C^*$}
		\node(A)(20,0){$A^*$}
		\drawedge[curvedepth=-5,ELside=r](Aa,A){$\varphi$} 
		\drawedge(Aa,C){$\beta$}
		\drawedge(C,A){$\alpha$}
	\end{picture}
\end{center}
Since $\alpha$ is length-preserving, the language $K = \alpha^{-1}(L)$ belongs to
$\cV(C^*)$. It follows by Proposition \ref{prop:cV} that $K$ is a finite union of
languages of the form $L_1 \shu {}\dotsm{} \shu L_k$ where, for $1 \leq i \leq k$,
$L_i \in \cV(c_i^*)$. Let, for $1 \leq i \leq k$, $\beta_i$ be the unique length
preserving morphism from $a^*$ to $c_i^*$, defined by $\beta_i(a^r) = c_i^r$. 
We claim that
\begin{equation}\label{eq:inverse beta}
	\beta^{-1}(L_1 \shu {}\dotsm{} \shu L_k) = \beta_1^{-1}(L_1) \cap {}\dotsm{}
	\cap \beta_k^{-1}(L_k)
\end{equation}
Let $R$ be the right hand side of \eqref{eq:inverse beta}. If $a^r \in R$, then
$\beta_i(a^r) \in L_i$. Therefore $c_i^r \in L_i$ and since $\beta(a^r) = (c_1
{}\dotsm{} c_k)^r$, $\beta(a^r) \in L_1 \shu {}\dotsm{} \shu L_k$. Thus $R$ is a
subset of $\beta^{-1}(L_1 \shu {}\dotsm{} \shu L_k)$.

If now $a^r \in \beta^{-1}(L_1 \shu {}\dotsm{} \shu L_k)$, then $\beta(a^r) \in
c_1^{n_1} \shu {}\dotsm{} \shu c_k^{n_k}$ with $c^{n_i} \in L_i$ for $1 \leq i \leq
k$. But since $\beta(a^r) = (c_1 {}\dotsm{} c_k)^r$, one has $n_1 = {}\dotsm{} = n_k
= r$ and hence $c_i^r \in L_i$. Therefore $a^r \in \beta_i^{-1}(L_i)$ for all $i$ and
thus $a^r$ belongs $R$. This proves \eqref{eq:inverse beta}.
 
Since $L_i \in \cV(c_i^*)$ and $\beta_i$ is length-preserving, $\beta_i^{-1}(L_i) \in
\cV(a^*)$. As $K$ is a finite union of languages of the form $L_1 \shu {}\dotsm{}
\shu L_k$, Formula \eqref{eq:inverse beta} shows that $\beta^{-1}(K) \in \cV(a^*)$.
Finally, since $\varphi = \alpha \circ \beta$, one gets $\varphi^{-1}(L) =
\beta^{-1}(\alpha^{-1}(L)) = \beta^{-1}(K)$. Therefore $\varphi^{-1}(L) \in
\cV(a^*)$, which concludes the proof.\end{proof}

Theorem \ref{thm:ld varieties are varieties} has a curious interpretation on the set
of natural numbers, mentioned in \cite{CegielskiGrigorieffGuessarian14}. Setting,
for each subset $L$ of $\bbN$ and each positive integer $k$,
\begin{align*}
	L-1 &= \{n \in \bbN \mid n + 1 \in L \} \\
	L \div k &= \{n \in \bbN \mid kn \in L \}
\end{align*}
one gets the following result:

\begin{proposition}\label{prop:arithmetic result}
	Let $\cL$ be a lattice of finite subsets\footnote{It also works for a lattice of
	regular subsets of $\bbN$.} of $\bbN$ such that if $L \in \cL$, then $L-1 \in
	\cL$. Then for each positive integer $k$, $L \in \cL$ implies $L \div k \in \cL$.
\end{proposition}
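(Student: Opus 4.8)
The plan is to translate the statement into the language of the paper and then extract the concrete combinatorial content of the proof of Theorem~\ref{thm:ld varieties are varieties}. Identify a subset $L\subseteq\bbN$ with the language $\{a^n\mid n\in L\}$ over the one-letter alphabet $a^*$. Then $L-1$ is the quotient $a^{-1}L$; since on a one-letter alphabet $u^{-1}L=Lu^{-1}$ and every quotient by $a^m$ is an iterate of $a^{-1}$, the hypothesis on $\cL$ says exactly that $\cL$ is a lattice of languages over $a^*$ closed under quotients. Moreover $L\div k=\varphi^{-1}(L)$ where $\varphi\colon a^*\to a^*$ is the morphism $a\mapsto a^k$. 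So the statement is that a lattice of regular languages over $a^*$ closed under quotients is closed under $\varphi^{-1}$, which is what Theorem~\ref{thm:ld varieties are varieties} would give if $\cL$ were the one-letter part of a commutative positive $ld$-variety. Rather than try to produce such a variety (the one-letter part of the variety generated by $\cL$ can be strictly larger than $\cL$), I would reproduce the computation behind the theorem directly: factor $\varphi$ through a $k$-letter alphabet, describe the preimage of $L$ as a union of shuffles, and intersect it back. This reduces everything to one combinatorial lemma.

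For $m\in\bbN$ put $T(m)=\{j\ge 0\mid m+j\in L\}$, so that $T(m)$ is the set denoted $L-m$. The key lemma is:
\[
\text{if } T(n)\supseteq T(n_0)\ \text{and}\ kn_0\in L,\ \text{then } kn\in L.
\]
For the proof, set $a_t=tn+(k-t)n_0$ for $0\le t\le k$; then $a_0=kn_0$, $a_k=kn$, and $a_t=n_0+b_t$ with $b_t=tn+(k-t-1)n_0\ge 0$ for $t\le k-1$. Now $a_0\in L$ by hypothesis, and if $a_t\in L$ with $t\le k-1$ then $b_t\in T(n_0)\subseteq T(n)$, which means $a_{t+1}=n+b_t\in L$; by induction $a_k=kn\in L$.

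Granting the lemma, one has the identity
\[
L\div k \;=\; \bigcup_{n_0\in L\div k}\ \{\,n\mid T(n)\supseteq T(n_0)\,\},
\]
where $\subseteq$ is trivial (take $n_0=n$) and $\supseteq$ is the lemma. Since $\{n\mid T(n)\supseteq T(n_0)\}=\bigcap_{j\in T(n_0)}(L-j)$, and since $L$ is regular so that $\{L-j\mid j\ge 0\}$ is finite, each such set is a finite intersection of members of $\cL$ (all the $L-j$ lie in $\cL$, by quotient-closure), hence lies in $\cL$; moreover it depends only on $T(n_0)=L-n_0$, which takes only finitely many values, so the displayed union is finite and therefore lies in $\cL$ as well. (If $L\div k=\emptyset$ one checks directly that $L$ then contains no set of the form $\{m,m+1,\dots\}$, whence $\emptyset=\bigcap_{j\ge 0}(L-j)\in\cL$.) The case of a lattice of finite subsets is identical, every set in sight being automatically finite. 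I expect the only real obstacle to be guessing the lemma and its telescoping proof; the remainder is routine manipulation of quotients of regular languages.
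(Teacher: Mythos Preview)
Your proof is correct. The key lemma is exactly the statement that the syntactic preorder of $L$ on $(\bbN,+)$ is compatible with addition: $T(n)\supseteq T(n_0)$ means $n_0\le_L n$, and your telescoping argument is a direct verification that $n_0\le_L n$ implies $kn_0\le_L kn$. The decomposition $L\div k=\bigcup_{n_0}\bigcap_{j\in T(n_0)}(L-j)$ then expresses $L\div k$ as a finite lattice combination of quotients of $L$, which is all that is needed.

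The paper does not spell out a proof of this proposition; it presents it as a reformulation of Theorem~\ref{thm:ld varieties are varieties}. The intended argument is to extend $\cL$ to a commutative positive $ld$-variety $\cW$ via the shuffle construction of Proposition~\ref{prop:cV} (so that $\cW(a^*)=\cL$), and then invoke Theorem~\ref{thm:ld varieties are varieties} to conclude that $\cW$ is a genuine positive variety, hence closed under $\varphi^{-1}$ for $\varphi(a)=a^k$. Your route is genuinely different: you extract the one-variable combinatorial content of that proof and work entirely inside $\bbN$, avoiding the multi-alphabet machinery altogether. This is more elementary and self-contained; the paper's route, on the other hand, explains \emph{why} the result holds (it is the one-letter shadow of a general variety-theoretic fact).

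Two small remarks. First, your parenthetical concern that ``the one-letter part of the variety generated by $\cL$ can be strictly larger than $\cL$'' is precisely what Theorem~\ref{thm:ld varieties are varieties} (together with the shuffle construction) rules out; but you are right that this is not obvious a priori, and your direct argument sidesteps the issue cleanly. Second, the treatment of the empty case is more elaborate than necessary: if $\cL$ contains any finite set $L$, then $L-m=\emptyset$ for $m>\max L$, so $\emptyset\in\cL$ already by the quotient hypothesis.
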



\section{Operations on commutative languages}\label{sec:Operations on
commutative languages}

In this section, we compare the expressive power of three operations on commutative
languages: product, shuffle and renaming. 


\subsection{Shuffle}\label{subsec:Shuffle}

Let us say that a positive variety of languages $\cV$ is \emph{closed under product
over one-letter alphabets} if, for each one-letter alphabet $a$, $\cV(a^*)$ is closed
under product. Commutative positive varieties closed under shuffle may be described
in various ways.

\begin{proposition}\label{prop:shuffle}
	Let $\cV$ be a commutative positive variety of languages and let $\bV$ be the
	corresponding variety of ordered monoids. The following conditions are equivalent:
\begin{conditions}
	\item \label{item:surjective renaming} $\cV$ is closed under surjective renaming,

	\item \label{item:shuffle product} $\cV$ is closed under shuffle product,

	\item \label{item:product} $\cV$ is closed under product over one-letter alphabets,
	
	\item \label{item:V = PuV} $\bV = \PdV$.
\end{conditions}
\end{proposition}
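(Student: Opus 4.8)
The plan is to prove the cycle of implications
$\eqref{item:surjective renaming}\Rightarrow\eqref{item:shuffle product}\Rightarrow\eqref{item:product}\Rightarrow\eqref{item:V = PuV}\Rightarrow\eqref{item:surjective renaming}$.
The implication $\eqref{item:surjective renaming}\Rightarrow\eqref{item:V = PuV}$ and its converse are essentially Corollary~\ref{cor:renaming}, so a substantial part of the work reduces to relating shuffle and product over one-letter alphabets to the other two conditions.

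First I would do $\eqref{item:shuffle product}\Rightarrow\eqref{item:product}$: this is trivial, since over a one-letter alphabet $a^*$ the shuffle $L_1\shu L_2$ of two languages of $a^*$ is literally their product $L_1L_2$ (the interleaving imposes no constraint when there is only one letter). Next, $\eqref{item:product}\Rightarrow\eqref{item:shuffle product}$: given $\cV$ closed under product over one-letter alphabets, I want to show $\cV$ is closed under arbitrary shuffle. By Corollary~\ref{cor:commutative ld-varieties} / Proposition~\ref{prop:cV}, $\cV(A^*)$ is exactly the finite unions of shuffles $L_1\shu\dots\shu L_k$ with $L_i\in\cV(a_i^*)$. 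Since shuffle distributes over union and is associative and commutative, the shuffle of two such basic languages is again a finite union of shuffles of the same form, using the single-letter identity $(L_i\shu L_i')$ restricted to $a_i^*$ equals $L_iL_i'\in\cV(a_i^*)$ (closure under product over one-letter alphabets) — together with the observation that when we shuffle languages indexed over \emph{disjoint} letters there is nothing to check, we just concatenate the index families. So $\cV(A^*)$ is closed under shuffle.

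For $\eqref{item:product}\Rightarrow\eqref{item:V = PuV}$ (equivalently $\eqref{item:surjective renaming}$, via Corollary~\ref{cor:renaming}): closure under surjective renaming is, by Proposition~\ref{prop:cV} and Lemma~\ref{lem:renaming = injective and surjective}, governed by what happens over a one-letter alphabet after merging letters. Concretely, a surjective renaming $\alpha:C^*\to a^*$ collapsing all of $C=\{c_1,\dots,c_k\}$ to $a$ sends $L_1\shu\dots\shu L_k$ (with $L_i\in\cV(c_i^*)$, identified with languages of $a^*$) to the product $L_1\dots L_k$ in $a^*$; so closure under surjective renaming of the basic building blocks is exactly closure of $\cV(a^*)$ under product. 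One must also check the other surjective renamings (those not collapsing to one letter), but these factor through renamings between multi-letter alphabets where, by Proposition~\ref{prop:cV}, the image of a basic shuffle is again a basic shuffle after grouping the index families that get identified, and the only genuine content is again the single-letter product. Hence $\eqref{item:product}$ gives closure under surjective renaming, which by Corollary~\ref{cor:renaming} is $\bV=\PdV$, and conversely $\bV=\PdV$ gives closure under surjective renaming which, restricted to one-letter alphabets and the collapse map above, yields closure under product.

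The main obstacle I anticipate is the bookkeeping in $\eqref{item:V = PuV}\Rightarrow\eqref{item:product}$ and in verifying that $\eqref{item:product}$ genuinely implies closure under \emph{all} surjective renamings and not merely the total-collapse ones: one must argue carefully, using Proposition~\ref{prop:cV}, that an arbitrary surjective renaming $\alpha:C^*\to D^*$ sends a basic shuffle $\shu_{c\in C}L_c$ to $\shu_{d\in D}\bigl(\prod_{c\in\alpha^{-1}(d)}L_c\bigr)$, where each inner product lives in $d^*$ and belongs to $\cV(d^*)$ precisely because of closure under product over one-letter alphabets. Establishing that identity (interleavings over $C$ push forward to interleavings over $D$, with the fibers concatenating) is the one place where a short but genuine combinatorial verification is needed; everything else is assembling Corollary~\ref{cor:renaming}, Proposition~\ref{prop:cV}, and the trivial single-letter coincidence of shuffle and product.
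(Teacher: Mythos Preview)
Your proposal is correct, and for the substantive steps it coincides with the paper's proof: the equivalence $\eqref{item:surjective renaming}\Leftrightarrow\eqref{item:V = PuV}$ is indeed just Corollary~\ref{cor:renaming}, the implication $\eqref{item:shuffle product}\Rightarrow\eqref{item:product}$ is trivial for the reason you give, and your argument for $\eqref{item:product}\Rightarrow\eqref{item:surjective renaming}$ via the identity
\[
\alpha\Bigl(\shu_{c\in C}L_c\Bigr)=\shu_{d\in D}\Bigl(\prod_{c\in\alpha^{-1}(d)}L_c\Bigr)
\]
is exactly what the paper does (with Proposition~\ref{prop:cV} reducing to basic shuffles).

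The one genuine difference is how $\eqref{item:surjective renaming}\Rightarrow\eqref{item:shuffle product}$ is obtained. You close the loop indirectly, going $\eqref{item:surjective renaming}\Rightarrow\eqref{item:product}$ via the total-collapse renaming and then $\eqref{item:product}\Rightarrow\eqref{item:shuffle product}$ using Proposition~\ref{prop:cV} and commutativity. The paper instead proves $\eqref{item:surjective renaming}\Rightarrow\eqref{item:shuffle product}$ directly by a general trick that does not use commutativity at all: with $B=A\times\{0,1\}$ and the obvious projections $\pi_0,\pi_1,\pi:B^*\to A^*$, one has $L_0\shu L_1=\pi\bigl(\pi_0^{-1}(L_0)\cap\pi_1^{-1}(L_1)\bigr)$, so closure under surjective renaming immediately gives closure under shuffle in \emph{any} positive variety. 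Your route is fine in the commutative setting and makes the role of Proposition~\ref{prop:cV} more visible; the paper's route is shorter and more general. A minor expository point: the cycle you announce, $\eqref{item:surjective renaming}\Rightarrow\eqref{item:shuffle product}\Rightarrow\eqref{item:product}\Rightarrow\eqref{item:V = PuV}\Rightarrow\eqref{item:surjective renaming}$, is not quite what you then execute (you never argue $\eqref{item:surjective renaming}\Rightarrow\eqref{item:shuffle product}$ as such), so it would be cleaner to state the implications you actually prove.
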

	
\begin{proof} 

\eqref{item:surjective renaming} implies \eqref{item:shuffle product}. Let $B = A
\times \{0, 1\}$ and let $\pi_0$, $\pi_1$ and $\pi$ be the three morphisms from $B^*$
to $A^*$ defined for all $a \in A$ by
\begin{alignat*}{3}
	\pi_0(a, 0) &= a &\qquad \pi_1(a, 0) &= 1 &\qquad \pi(a, 0) &= a \\
	\pi_0(a, 1) &= 1 &       \pi_1(a, 1) &= a &       \pi(a, 1) &= a 
\end{alignat*}
Let $L_0$ and $L_1$ be two languages of $A^*$. Since $\pi$ is a surjective renaming,
the formula $L_0 \shu L_1 = \pi(\pi^{-1}_0(L_0) \cap \pi^{-1}_1(L_1))$ shows that
every positive variety closed under surjective renaming is closed under shuffle
product.

\eqref{item:shuffle product} implies \eqref{item:product} is trivial since on a
one-letter alphabet, shuffle product and product are the same.

\eqref{item:product} implies \eqref{item:surjective renaming}. Let $\pi: A^* \to B^*$
be a surjective renaming. For each $b \in B$, let $\gamma_b:b^* \to a^*$ be the
renaming which maps $b$ onto $a$. Let $L$ be a language of $\cV(A^*)$. By Proposition
\ref{prop:cV}, $L$ is a finite union of languages of the form $\shu_{a \in A} L_a$
where $L_a \in \cV(a^*)$ for each $a \in A$. For each $b \in B$, let
$$
  K_b = \prod_{a \in \pi^{-1}(b)}\gamma_b^{-1}(L_a)
$$
If $\cV(a^*)$ is closed under product for each one-letter alphabet $a$, then $K_b$
belongs to $\cV(b^*)$. Finally, the formula $\pi(L) = \shu_{b \in B} K_b$ shows that
$\pi(L)$ belongs to $\cV(B^*)$. Therefore $\cV$ is closed under surjective renaming.

Finally, the equivalence of \eqref{item:surjective renaming} and \eqref{item:V = 
PuV} follows from Corollary \ref{cor:renaming}.\end{proof}


\subsection{Renaming}\label{subsec:Renaming}

\noindent Let us say that a positive variety of languages contains $\{1\}$ if, for
every alphabet $A$, $\cV(A^*)$ contains the language $\{1\}$. The following result 
is a slight variation on Proposition \ref{prop:shuffle}.

\begin{proposition}\label{prop:renaming}
	Let $\cV$ be a commutative positive variety of languages and let $\bV$ be the
	corresponding variety of ordered monoids. The following conditions are equivalent:
\begin{conditions}
	\item \label{item:renaming} $\cV$ is closed under renaming,

	\item \label{item:surjective renaming and 1} $\cV$ is closed under surjective
	renaming and contains $\{1\}$,

	\item \label{item:shuffle product and 1} $\cV$ is closed under shuffle product and
	contains $\{1\}$,

	\item \label{item:product and 1} $\cV$ is closed under product over one-letter
	alphabets and contains $\{1\}$,

	\item \label{item:V = PzV} $\bV = \PdzV$.
\end{conditions}
\end{proposition}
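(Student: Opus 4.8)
The plan is to prove Proposition \ref{prop:renaming} by establishing the chain of implications \eqref{item:renaming} $\Rightarrow$ \eqref{item:surjective renaming and 1} $\Rightarrow$ \eqref{item:shuffle product and 1} $\Rightarrow$ \eqref{item:product and 1} $\Rightarrow$ \eqref{item:renaming}, together with the equivalence \eqref{item:surjective renaming and 1} $\Leftrightarrow$ \eqref{item:V = PzV}, reusing as much as possible of the machinery already set up for Proposition \ref{prop:shuffle}. The backbone is that conditions \eqref{item:surjective renaming and 1}, \eqref{item:shuffle product and 1}, \eqref{item:product and 1} are each just the corresponding condition of Proposition \ref{prop:shuffle} (closure under surjective renaming, under shuffle, under product over one-letter alphabets) with the extra requirement ``and contains $\{1\}$'' appended. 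So the equivalences among \eqref{item:surjective renaming and 1}, \eqref{item:shuffle product and 1}, \eqref{item:product and 1} follow immediately from Proposition \ref{prop:shuffle} by conjoining the common clause $\{1\} \in \cV$; and the equivalence of \eqref{item:surjective renaming and 1} with \eqref{item:V = PzV} will come from Proposition \ref{prop:shuffle}\eqref{item:V = PuV} (which says $\bV = \PdV$) combined with the identity \eqref{eq:PdzV}, namely $\PdzV = \PdV \vee \bSlp$, once we check that ``$\cV$ contains $\{1\}$'' translates algebraically into ``$\bV \supseteq \bSlp$'', equivalently that $U_1^\da \in \bV$.

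The first substantive point is that implication: \emph{$\cV$ contains $\{1\}$ iff $U_1^\da \in \bV$}. For the forward direction, the syntactic ordered monoid of the language $\{1\}$ over a one-letter alphabet $a$ is exactly $U_1^\da = \{0 < 1\}$ with $a \mapsto 0$ (as in Example \ref{ex:1 + a}, truncated: here $1 \in L$ but $a \notin L$, and the syntactic order puts $0 < 1$), so if $\{1\} \in \cV(a^*)$ then $U_1^\da \in \bV$, and since $\bV$ is a variety, $\bSlp \subseteq \bV$. Conversely, if $U_1^\da \in \bV$ then $\{1\}$ over $a^*$ is in $\cV(a^*)$, and since $\cV$ is (a positive variety, hence in particular) closed under inverses of morphisms, for any alphabet $A$ the morphism $h \colon A^* \to a^*$ sending every letter to $a$ gives $h^{-1}(\{1\}) = \{1\} \in \cV(A^*)$. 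Hence \eqref{item:surjective renaming and 1} is equivalent to ``$\bV = \PdV$ and $\bSlp \subseteq \bV$'', which by \eqref{eq:PdzV} and the idempotence of $\Pdz$ is equivalent to ``$\bV = \PdV \vee \bSlp = \PdzV$'', i.e.\ to \eqref{item:V = PzV}. (One should double-check the easy inclusions: if $\bV = \PdzV$ then $\bV = \PdV \vee \bSlp \supseteq \PdV$, and applying $\Pd$ and using idempotence of $\Pd$ together with $\Pd\bSlp \subseteq \Pdz\bSlp$... more cleanly: $\PdV = \Pd(\PdzV) = \Pd(\PdV \vee \bSlp)$; one wants this to equal $\PdV$, which holds because $\bSlp \subseteq \PdV \vee \bSlp$ and $\PdzV$ is closed under $\Pd$ since $\Pdz\PdzV = \PdzV$ and $\PdV \subseteq \PdzV \subseteq \Pd$-closed... this needs the idempotence facts quoted in Section \ref{subsec:Downset monoids}.)

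Next I would handle the four language-level implications. For \eqref{item:renaming} $\Rightarrow$ \eqref{item:surjective renaming and 1}: renaming includes surjective renaming trivially, and to get $\{1\}$, note $\{1\} \in \cV(a^*)$ already holds because over the empty alphabet $\emptyset^* = \{1\} \in \cV$ (a positive variety contains $A^*$ for every $A$) and the embedding $\emptyset^* \hookrightarrow a^*$ is an injective renaming, so $\{1\} = $ image of $\{1\}$ lies in $\cV(a^*)$; then propagate to all $A^*$ via inverse morphisms as above. The implications \eqref{item:surjective renaming and 1} $\Rightarrow$ \eqref{item:shuffle product and 1} $\Rightarrow$ \eqref{item:product and 1} are immediate from the corresponding implications of Proposition \ref{prop:shuffle} after carrying the clause $\{1\} \in \cV$ along unchanged. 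The only genuinely new work is \eqref{item:product and 1} $\Rightarrow$ \eqref{item:renaming}: here I would use Proposition \ref{prop:shuffle}, \eqref{item:product} $\Rightarrow$ \eqref{item:surjective renaming}, to get closure under surjective renaming, and then invoke Lemma \ref{lem:renaming = injective and surjective} so that it remains to prove closure under injective renaming; for that I would use Proposition \ref{prop:lp-varieties closed under renaming} (every positive variety is in particular a positive $lp$-variety), whose condition \eqref{item:nonempty B*} — for each alphabet $A$ and nonempty $B \subseteq A$, $B^* \in \cV(A^*)$ — needs to be checked. Here $B^*$ is the shuffle $\shu_{a \in B} a^*$ (on a one-letter alphabet $a^* \in \cV(a^*)$ since $\cV$ is a variety and $a^*$ is recognised by the trivial monoid) together with the intersection of the pullbacks killing the letters of $A \setminus B$ — and this is where having $\{1\} \in \cV$ matters, since for $A \setminus B$ one needs $\{1\} \in \cV$ over each letter of $A\setminus B$ to cut down to $B^*$; concretely $B^* = h^{-1}(\{1\})$ for $h\colon A^* \to (A\setminus B)^*$ erasing $B$, or one builds $B^*$ directly from shuffle/product closure. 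The main obstacle — really the only place requiring care rather than bookkeeping — is getting the algebraic translation ``contains $\{1\}$'' $\Leftrightarrow$ ``$U_1^\da \in \bV$'' tightly right and then combining it with \eqref{eq:PdzV} and the idempotence of $\Pd$, $\Pdz$ to conclude $\bV = \PdzV$; everything else is a light reshuffling of Proposition \ref{prop:shuffle} and its cited inputs.
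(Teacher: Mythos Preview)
Your approach is correct and reaches the same conclusions, but it diverges from the paper in how condition \eqref{item:V = PzV} is handled and is unnecessarily complicated there. The paper simply invokes Corollary~\ref{cor:renaming}, which already states that a positive variety is closed under renaming if and only if $\bV = \PdzV$; this gives \eqref{item:renaming} $\Leftrightarrow$ \eqref{item:V = PzV} in one line, with no need for Formula~\eqref{eq:PdzV} or the translation ``contains $\{1\}$'' $\Leftrightarrow$ $U_1^\da \in \bV$. Your detour through \eqref{item:surjective renaming and 1} $\Leftrightarrow$ \eqref{item:V = PzV} via $\PdzV = \PdV \vee \bSlp$ is valid, but your write-up of the backward direction (from $\bV = \PdzV$ to $\bV = \PdV$) is genuinely muddled: the attempted computation $\PdV = \Pd(\PdzV) = \Pd(\PdV \vee \bSlp)$ goes nowhere. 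The clean argument you are missing is the chain of standard inclusions $\bV \subseteq \PdV \subseteq \PdzV$ (the first because $m \mapsto \da m$ embeds $M$ into $\cPd(M)$), which collapses to equalities once $\bV = \PdzV$.

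Two smaller differences: for \eqref{item:renaming} $\Rightarrow$ ``contains $\{1\}$'' the paper avoids the empty alphabet and instead uses a two-letter alphabet $\{a,b\}$ with the renaming sending both letters to $a$, so that $a^*, b^* \in \cV(A^*)$ and hence $\{1\} = a^* \cap b^* \in \cV(A^*)$; and for closure under injective renaming the paper cites Proposition~\ref{prop:ld-varieties closed under renaming} directly (a positive variety is in particular an $ld$-variety, and that proposition reduces the question exactly to $\{1\} \in \cV(a^*)$), rather than going through Proposition~\ref{prop:lp-varieties closed under renaming} and reconstructing $B^*$ by hand. Your arguments for these steps are fine, just longer.
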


\begin{proof} The equivalence of \eqref{item:surjective renaming and
1}---\eqref{item:product and 1} follows directly from Proposition \ref{prop:shuffle}.
If \eqref{item:surjective renaming and 1} holds, then $\cV$ is closed under injective
renaming by Proposition \ref{prop:ld-varieties closed under renaming} and hence is
closed under renaming by Lemma \ref{lem:renaming = injective and surjective}. Thus
\eqref{item:surjective renaming and 1} implies \eqref{item:renaming}.

To show that \eqref{item:renaming} implies \eqref{item:surjective renaming and 1}, it
suffices to show that if $\cV$ is closed under renaming then it contains $\{1\}$. Let
$A = \{a, b\}$ and let $\pi: A^* \to A^*$ be the renaming defined by $\pi(a) = \pi(b)
= a$. Since $A^* \in \cV(A^*)$ and $\pi(A^*) = a^*$, one has $a^* \in \cV(A^*)$. A
similar argument would show that $b^* \in \cV(A^*)$ and thus the language $\{1\}$,
which is the intersection of $a^*$ and $b^*$ also belongs to $\cV(A^*)$. Consider now
an alphabet $B$ and the morphism $\alpha$ from $B^*$ to $A^*$ defined by $\alpha(c) =
a$ for each $c \in B$. Then $\alpha^{-1}(\{1\}) = \{1\}$ and thus $\cV$ contains
$\{1\}$.

Finally, the equivalence of \eqref{item:renaming} and \eqref{item:V = PzV} follows
from Corollary \ref{cor:renaming}.\end{proof}


\section{Three examples}\label{sec:Three examples}

In this section, we study the positive varieties of languages generated by the
languages of Examples \ref{ex:1 + a}, \ref{ex:a + a6a*} and \ref{ex:a + (a3 +
a4)(a7)*}.


\subsection{The language \texorpdfstring{$1 + a$}{1 + a}}\label{subsec:1 + a}

Let $L$ be the language $1 + a$, let $M$ be its ordered syntactic monoid and let
$\cV$ be the smallest commutative positive variety such that $\cV(a^*)$ contains $L$.
Let $\bV$ be the variety of finite ordered monoids corresponding to $\cV$.

Since a positive variety of languages is closed under quotients, $\cV(a^*)$ contains
the language $a^{-1}L = 1$. It follows that $\cV(a^*)$ contains 4 languages:
$\emptyset$, $1$, $1 + a$ and $a^*$. We claim that
\[
   \bV = \cro\, \text{$xy = yx$, $x \leq 1$ and $x^2 \leq x^3$}\, \crf.
\]
First, the two inequalities $x \leq 1$ and $x^2 \leq x^3$ hold in $M$. Furthermore,
the inequality $x \leq 1$ implies the inequalities of the form $x^p \leq x^q$ with $p
> q$ and the inequality $x^2 \leq x^3$ implies all the inequalities of the form $x^p
\leq x^q$ with $2 \leq p < q$. The only other nontrivial inequalities that $\bV$
could possibly satisfy are $1 \leq x^q$ for $q > 0$ or $x \leq x^q$ for $q > 1$.
However, $M$ does not satisfy any of these inequalities.

Let $\cV'$ be the closure of $\cV$ under shuffle, or equivalently, under product over
one-letter alphabets. Then $\cV'(a^*)$ contains the empty language, the language
$a^*$ and all languages of the form $(1 + a)^n$ with $n \geq 0$. By Theorem
\ref{thm:renaming} and Proposition \ref{prop:shuffle}, $\cV'$ corresponds to the
variety of ordered monoids $\PdV$. We claim that
\[
	\PdV = \cro\, \text{$xy = yx$ and $x \leq 1$}\, \crf.
\]
Indeed, the ordered syntactic monoids of the languages of $\cV'(a^*)$ all satisfy $xy
= yx$ and $x \leq 1$. Conversely, if the ordered syntactic monoid of a language $K$
of $a^*$ satisfies $x \leq 1$, then $x^n \leq_K 1$ for every $n \geq 0$, and
$K$ is closed under taking subwords. If $K$ is infinite, this forces $K = a^*$.
If $K$ is finite, it is necessarily of the form $(1 + a)^n$ with $n \geq 0$. In both
cases, $K$ belongs to $\cV'(a^*)$.

Finally, let $\bW$ be the variety of ordered monoids corresponding to the closure of
$\cV$ under renaming. Since $U_1^\da \in \PdV$, Theorem \ref{thm:renaming} and
Formula \eqref{eq:PdzV} show that
\[
	\bW = \PdzV = {\PdV \vee \bSlp} = \PdV = \cro\, \text{$xy = yx$ and $x \leq 1$}\,
	\crf.
\]


\subsection{The language \texorpdfstring{$a + a^6a^*$}{a + a6a*}}\label{subsec:a + a6a*}

Let $L$ be the language $a + a^6a^*$, let $M$ be its ordered syntactic monoid and let
$\cV$ be the smallest commutative positive variety such that $\cV(a^*)$ contains $L$.
Let $\bV$ be the variety of finite ordered monoids corresponding to $\cV$.

Since a positive variety of languages is closed under quotients, $\cV(a^*)$ contains
the language $a^{-1}L = 1 + a^5a^*$ and the language ${L \cap a^{-1}L} = a^6a^*$. It
also contains the quotients of this language, which are the languages $a^ja^*$, for
$j \leq 6$. Taking the union with $L$, $a^{-1}L$ or both, one finally concludes that
$\cV(a^*)$ contains 20 languages: $\emptyset$, $a^ia^*$ for $0 \leq i \leq 6$, $1 +
a^ia^*$ for $1 \leq i \leq 5$, $a + a^ia^*$ for $3 \leq i \leq 6$ and $1 + a +
a^ia^*$ for $ 3 \leq i \leq 5$.

We claim that
\[
   \bV = \cro xy = yx, 1 \leq x^5, x^2 \leq x^3, x^6 = x^7 \crf.
\]
Indeed, all defining inequalities hold in $M$. Since $x^6 = x^7$, the other possible
inequalities satisfied by $M$ are equivalent to an inequality of the form $x^p \leq
x^q$ with $p < q \leq 6$. For $p = 0$, the only inequalities of this form satisfied
by $M$ are $1 \leq x^5$ and $1 \leq x^6$, but $1 \leq x^6$ is a consequence of $1
\leq x^5$ and $x^2 \leq x^3$ since $1 \leq x^5 = x^3x^2 \leq x^3x^3 = x^6$. For $p =
1$, the only inequality of this form satisfied by $M$ is $x \leq x^6$, which is a
consequence of $1 \leq x^5$. Finally, the inequality $x^2 \leq x^3$ implies $x^p \leq
x^q$ for $2 \leq p < q \leq 6$.

Let $\cV'$ be the closure of $\cV$ under shuffle, or equivalently, under product over
one-letter alphabets. We claim that $\cV'(a^*)$ consists of the empty set and the
languages of the form
\begin{equation}
  a^n(F + a^5a^*) \label{eq:languages}
\end{equation}
where $n \geq 0$ and $F$ is a subset of $(1 + a)^4$. First of all, the languages of
the form \eqref{eq:languages} and the empty set form a lattice closed under product,
since if $0 \leq n \leq m$ and $F$ and $G$ are subsets of $(1 + a)^4$, then
\begin{align*}
  a^n(F + a^5a^*) + a^m(G + a^5a^*) &= a^n(F + a^{m-n}G + a^5a^*) \\
	a^n(F + a^5a^*) \cap a^m(G + a^5a^*) &= a^m\Bigl(\bigl((a^{m-n})^{-1}(F
	+ a^5a^*)\bigr) \cap G\bigr) + a^5a^*\biggr)\\
  a^n(F + a^5a^*) a^m(G + a^5a^*) &= a^{n+m}(FG + a^5a^*)
\end{align*}
Since $\cV'(a^*)$ is closed under finite unions, it just remains to prove that the
languages of the form $a^n(a^k + a^5a^*)$, with $ n \geq 0$ and $0 \leq k \leq 4$ all
belong to $\cV'(a^*)$. But since the languages $a + a^6a^*$ and $1 + a^{5-k}a^*$ are
in $\cV(a^*)$, this follows from the formula
\[
  a^n(a^k + a^5a^*) = \bigl(a + a^6a^*)^{n+k}(1 + a^{5-k}a^*)
\]
By Theorem \ref{thm:renaming} and Proposition \ref{prop:shuffle}, $\cV'$ corresponds
to the variety of ordered monoids $\PdV$. We claim that
\[
	\PdV = \cro\, \text{$xy = yx$ and $1 \leq x^n$ for $5 \leq n \leq 9$}\, \crf.
\]
Indeed, the ordered syntactic monoid of any of the languages of the form
\eqref{eq:languages} satisfies all inequalities of the form $1 \leq x^n$ for $n \geq
5$, but the syntactic ordered monoid of $1 + a^2a^*$ does not satisfy any inequality
of the form $x^p \leq x^q$ with $p > q$. Moreover, the only inequalities that are not
an immediate consequence of an inequality of the form $1 \leq x^n$ with $5 \leq n
\leq 9$ are the inequalities $x^i \leq x^j$ with $0 \leq j - i \leq 4$. But none of these
inequalities are satisfied by the ordered syntactic monoid of $a^i(1 + a^5a^*)$.

Finally, Theorem \ref{thm:renaming} and Formula \eqref{eq:PdzV} show that the variety
of ordered monoids corresponding to the closure of $\cV$ under renaming is
\begin{align*}
	\PdzV &= {\PdV \vee \bSlp} \\
	&= \cro\, \text{$xy = yx$ and $1 \leq x^n$ for $5 \leq n \leq 9$}\, \crf \vee \cro xy
	= yx, x^2 = x, x \leq 1 \,\crf.
\end{align*}
We claim that $\PdzV= \bW$, where
\[
	\bW = \cro\, \text{$xy = yx$ and $x \leq x^n$ for $6 \leq n \leq 10$}\, \crf.
\]
First, the inequality $x \leq x^n$ is a consequence both of the inequality $1 \leq
x^{n-1}$ and of the equation $x= x^2$. It follows that $\PdzV \subseteq \bW$. To
establish the opposite inclusion, it suffices to establish the claim that any
inequality of the form $x^p \leq x^q$ satisfied by both $\PdV$ and $\bSlp$ is also
satisfied by $\bW$. If $p = 0$, then the inequality becomes $1 \leq x^q$ and it is
not satisfied by $\bSlp$ since $1 \not< 0$ in $U_1^\da$. Moreover, for $p > 0$, the
only inequalities of the form $x^p \leq x^q$ that are not an immediate consequence of an
inequality of the form $x \leq x^n$ with $6 \leq n \leq 10$ are the inequalities $x^p
\leq x^q$ with $0 \leq q - p \leq 4$. But we already observed that the ordered
syntactic monoid of $a^p(1 + a^5a^*)$ belongs to $\PdV$ but does not satisfy any of
these inequalities, which proves the claim.


\subsection{The language \texorpdfstring{$a + (a^3 + a^4)(a^7)^*$}{a + (a3 +
a4)(a7)*}}\label{subsec:a + (a3 + a4)(a7)*}

Let $L$ be the language $a + (a^3 + a^4)(a^7)^*$, let $M$ be its ordered syntactic
monoid and let $\cV$ be smallest commutative positive variety such that $\cV(a^*)$
contains $L$. Let $\bV$ be the variety of finite ordered monoids corresponding to
$\cV$. One has
\begin{align*} 
	(a)^{-1}L &= 1 + (a^2 + a^3)(a^7)^* & (a^2)^{-1}L &= (a + a^2)(a^7)^* \\
  (a^3)^{-1}L &= (1 + a)(a^7)^* & (a^4)^{-1}L &= (1+a^6)(a^7)^* \\
  (a^5)^{-1}L &= (a^5 + a^6)(a^7)^* & (a^6)^{-1}L &= (a^4 + a^5)(a^7)^* \\
  (a^7)^{-1}L &= (a^3 + a^4)(a^7)^* & (a^8)^{-1}L &= (a^2 + a^3)(a^7)^* 
\end{align*}
The set of final states of the minimal automaton of $L$ is $\{1, 3, 4\}$.
The quotients of $L$ are recognised by the same automaton by taking a different set 
of final states as indicated below
\begin{align*} 
	(a)^{-1}L &\to \{0, 2, 3\} & (a^2)^{-1}L &\to \{1, 2, 8\} \\
  (a^3)^{-1}L &\to \{0, 1, 7, 8\} & (a^4)^{-1}L &\to \{0, 6, 7\} \\
  (a^5)^{-1}L &\to \{5, 6\} & (a^6)^{-1}L &\to \{4, 5\} \\
  (a^7)^{-1}L &\to \{3, 4\} & (a^8)^{-1}L &\to \{2, 3\} 
\end{align*}
Observing that 
\begin{align*} 
	\{0\} &= \{0, 2, 3\} \cap \{0, 6, 7\} &
        \{1\} &= \{1, 3, 4\} \cap \{1, 2, 8\} \\
	\{2\} &= \{0, 2, 3\} \cap \{1, 2, 8\} &
	\{3\} &= \{1, 3, 4\} \cap \{0, 2, 3\} \\
        \{4\} &= \{3, 4\} \cap \{4, 5\} & 
	\{5\} &= \{4, 5\} \cap \{5, 6\} \\
	\{6\} &= \{5, 6\} \cap \{0, 6, 7\} &
        \{0, 7\} &= \{0, 6, 7\} \cap \{0, 1, 7, 8\} \\
	\{1, 8\} &= \{1, 2, 8\} \cap \{0, 1, 7, 8\}
\end{align*}
it follows that a language belongs to the lattice of languages generated by the 
quotients of $L$ if and only if it is accepted by the minimal automaton of $L$ 
equipped with a set $F$ of final states satisfying the two conditions 
\begin{equation}
  7 \in F \implies 0 \in F \quad \text{and} \quad 8 \in F \implies 1 \in F 
	\label{eq:F}
\end{equation}
Now, the complement of a set $F$ satisfying \eqref{eq:F} also satisfies \eqref{eq:F}.
It follows that the lattice of languages generated by the quotients of $L$ is
actually a Boolean algebra and consequently, $\cV$ is a variety of languages. It
also follows that
\[
   \bV = \cro xy = yx, x^2 = x^9 \crf.
\]
Moreover, since $U_1 = \{0, 1\}$ belongs to $\bV$, it follows that $\PV = \PzV$. By
\cite[Th\'eor\`eme 2.14]{Perrot78}, $\PV$ is the variety of all commutative monoids
whose groups satisfy the identity $x^7 = 1$. Therefore
\[
   \PV = \cro xy = yx, x^\omega = x^{\omega + 7} \crf.
\]
The closure of $\cV$ under shuffle, or equivalently, under product over one-letter
alphabets, and the closure of $\cV$ under renaming both correspond to the variety of
monoids $\PV$.

\section{Conclusion}\label{sec:Conclusion}

We gave an algebraic characterization of the commutative positive varieties of
languages closed under shuffle product, renaming or product over one-letter
alphabets, but several questions might be worth a further study.

First, each commutative variety of ordered monoids can be described by the equality
$xy = yx$ and by a set of inequalities in one variable, like $x^p \leq x^q$ or more
generally $x^\alpha \leq x^\beta$ with $\alpha, \beta \in \hN$. It would then be
interesting to compare these varieties. We just mention a few results of this
flavour, which may help in finding bases of inequalities for commutative positive
varieties of languages.

\begin{proposition}\label{prop:inequalities x xn} 
	The variety $\cro xy = yx, x \leq x^{n+1}\crf$ is contained in the variety $\cro xy
	= yx, x \leq x^{m+1}\crf$ if and only if $n$ divides $m$.
\end{proposition}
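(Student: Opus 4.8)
The plan is to prove the two implications separately, working with the bases of inequalities and using the fact that a commutative variety of ordered monoids is generated by its monogenic members (Proposition \ref{prop:commutative varieties}), so that containment can be checked on monogenic ordered monoids, i.e.\ on quotients of $(\bbN, +, 0)$ equipped with a compatible order.

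First I would address the ``if'' direction. Assuming $n \mid m$, write $m = nk$. The inequality $x \le x^{n+1}$ says that multiplication by $x^n$ is inflationary; iterating it $k$ times gives $x \le x^{n+1} \le x^{2n+1} \le \dotsm \le x^{nk+1} = x^{m+1}$, using compatibility of the order with the product at each step. Hence every ordered monoid satisfying $xy=yx$ and $x \le x^{n+1}$ also satisfies $x \le x^{m+1}$, which gives the inclusion $\cro xy=yx, x\le x^{n+1}\crf \subseteq \cro xy=yx, x\le x^{m+1}\crf$.

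For the ``only if'' direction I would argue by contraposition: assuming $n \nmid m$, I exhibit a monogenic commutative ordered monoid lying in $\cro xy=yx, x\le x^{m+1}\crf$ but not in $\cro xy=yx, x\le x^{n+1}\crf$. A natural candidate is built on the cyclic group $\bbZ/n\bbZ$ (or a monogenic monoid with a cyclic group part of order $n$), which satisfies $x = x^{n+1}$ — hence $x \le x^{m+1}$ precisely when $n \mid m$, and in particular fails $x \le x^{n+1}$ only if we equip it with the equality order; so I actually need a nontrivial order. The cleaner choice is a finite monogenic monoid with index and period chosen so that the cyclic part has order $n$, ordered by a suitable downset-type order making $x \le x^{n+1}$ hold but not $x \le x^{m+1}$; alternatively, since $n \nmid m$, pick a prime power $p^e$ dividing $n$ but with $p^e \nmid m$ and use the group $\bbZ/p^e\bbZ$, where $x^{m+1} = x$ would force $p^e \mid m$. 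I expect this construction — pinning down the right finite ordered monoid and verifying it satisfies $x \le x^{m+1}$ while refuting $x \le x^{n+1}$ — to be the main obstacle, since one must check not just the group case but that the chosen order is compatible and that no collapse forces the unwanted inequality. Once such a witness is in hand, Proposition \ref{prop:commutative varieties} (or simply the definition of variety) yields that it belongs to $\cro xy=yx, x\le x^{m+1}\crf$ but not to $\cro xy=yx, x\le x^{n+1}\crf$, completing the contrapositive and hence the proof.
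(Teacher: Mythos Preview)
Your ``if'' direction is correct and coincides with the paper's argument.

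The ``only if'' direction contains a genuine confusion about which way the witness must go. To refute the inclusion $\cro xy=yx,\ x\le x^{n+1}\crf \subseteq \cro xy=yx,\ x\le x^{m+1}\crf$ you must exhibit an ordered monoid satisfying $x\le x^{n+1}$ but \emph{not} $x\le x^{m+1}$; you state the opposite at the start and at the end of the paragraph (though you have it right once in the middle). Moreover, the cyclic group $\bbZ/n\bbZ$ with the \emph{equality} order already does the job: every element satisfies $x=x^{n+1}$, hence $x\le x^{n+1}$, while for the generator $x\le x^{m+1}$ would force $x=x^{m+1}$, i.e.\ $n\mid m$. Your dismissal of this example and the ensuing search for ``a nontrivial order'' or a prime-power subgroup are unnecessary detours; the obstacle you anticipate does not exist.

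The paper argues the converse directly rather than by contraposition, and with a language rather than an abstract monoid: the ordered syntactic monoid of $a(a^n)^*$ satisfies $x\le x^{n+1}$, so by hypothesis it satisfies $x\le x^{m+1}$; applying this with $x=a$ and using $a\in a(a^n)^*$ gives $a^{m+1}\in a(a^n)^*$, hence $n\mid m$. This is essentially the same witness (its maximal subgroup is $\bbZ/n\bbZ$), but the syntactic packaging turns the verification into a one-line membership check. No appeal to Proposition~\ref{prop:commutative varieties} is needed in either approach.
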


\begin{proof} Suppose that $n$ divides $m$, that is, $m = kn$ for some $k \geq 0$. If
$x \leq x^{n+1}$, then $x \leq xx^n$ and by induction, $x \leq xx^{kn} = xx^m =
x^{m+1}$. Thus $\cro xy = yx, x \leq x^{n+1}\crf$ is contained in the variety $\cro
xy = yx, x \leq x^{m+1}\crf$.
	
Suppose now that $\cro xy = yx, x \leq x^{n+1}\crf$ is contained in the variety $\cro
xy = yx, x \leq x^{m+1}\crf$. Then the ordered syntactic monoid of $a(a^n)^*$
satisfies the inequality $x \leq x^{n+1}$ and thus it also satisfies the inequality
$x \leq x^{m+1}$. Since $a \in a(a^n)^*$, this means in particular that $a^m \in
a(a^n)^*$ and thus that $n$ divides $m$.
\end{proof}

\noindent In fact, a more general result holds. For each set of natural numbers $S$,
let
\[
  \bV_S = \cro\, xy = yx, x \leq x^{n+1} \text{ for all $n \in S$}\, \crf.
\]
Let $\langle S \rangle$ denote the additive submonoid of $\bbN$ generated by $S$. 
It is a well-known fact that any additive subsemigroup of $\bbN$ is finitely 
generated and consequently, there exists a finite set of natural numbers $F_S$ such 
that $\langle S \rangle = \langle F_S \rangle$.

\begin{proposition}\label{prop:V_S} 
	The variety $\bV_S$ satisfies the inequality $x \leq x^{m+1}$ if and only if $m$ 
	belongs to $\langle S \rangle$. 
\end{proposition}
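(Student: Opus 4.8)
The plan is to prove both implications, drawing on Proposition~\ref{prop:inequalities x xn} as the ``base case'' and then bootstrapping via the fact that $\langle S\rangle = \langle F_S\rangle$ for a finite $F_S$. For the ``if'' direction, suppose $m \in \langle S\rangle$. Then $m = n_1 + \dotsm + n_r$ for some $n_1,\ldots,n_r \in S$ (with repetitions allowed). The argument from Proposition~\ref{prop:inequalities x xn} generalizes: working in a commutative ordered monoid satisfying $x \leq x^{n_i+1}$ for each $i$, one chains the inequalities. Explicitly, $x \leq x \cdot x^{n_1} \leq x^{n_1} \cdot (x \cdot x^{n_2}) = x \cdot x^{n_1+n_2} \leq \dotsm \leq x \cdot x^{n_1+\dotsm+n_r} = x^{m+1}$, where at each step we multiply the current upper bound by $x^{n_{i+1}}$ on the appropriate side and use $x \leq x^{n_{i+1}+1}$ together with compatibility of the order with the (commutative) product. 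Hence $\bV_S$ satisfies $x \leq x^{m+1}$.

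For the ``only if'' direction, suppose $\bV_S$ satisfies $x \leq x^{m+1}$. I would exhibit a single ordered monoid lying in $\bV_S$ that refutes $x \leq x^{m+1}$ whenever $m \notin \langle S\rangle$; a convenient candidate is (the ordered syntactic monoid of) the language $L_S$ on a one-letter alphabet defined so that its positively-below-$x$ exponents are exactly $1 + \langle S\rangle$. Concretely, consider the commutative ordered monoid $M$ on $a^*$ in which $a^p \leq a^q$ iff $q - p \in \langle S\rangle \cup \{0\}$ (one checks this is a genuine compatible preorder, using that $\langle S\rangle$ is a submonoid: if $q-p \in \langle S\rangle$ then $(q+t)-(p+t) \in \langle S\rangle$). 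Since $\langle S\rangle$ is finitely generated and every sufficiently large integer beyond the Frobenius bound lies in $\langle S\rangle$ when $\gcd(S)=1$ — and more care is needed in the non-coprime case — one must be slightly careful to make $M$ finite, but this can be arranged by collapsing above a suitable threshold, exactly as in the examples of Section~\ref{sec:Three examples}. In this monoid $a \leq a^{n+1}$ holds for each $n \in S$, so $M \in \bV_S$, yet $a \leq a^{m+1}$ fails precisely when $m \notin \langle S\rangle$. This contradicts the assumption, so $m \in \langle S\rangle$.

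The step I expect to be the main obstacle is making the witness ordered monoid $M$ genuinely \emph{finite} while preserving enough of the structure of $1 + \langle S\rangle$ to distinguish $x \leq x^{m+1}$ from all the generating inequalities $x \leq x^{n+1}$, $n \in S$. One clean way around this is to avoid building $M$ by hand altogether: instead, observe that the variety $\bV_S$ is generated by its monogenic members (Proposition~\ref{prop:commutative varieties}), so it suffices to test the inequality $x \leq x^{m+1}$ on finite \emph{monogenic} ordered monoids in $\bV_S$. A finite monogenic ordered monoid is determined by its ``index'' and ``period'' together with a translation-invariant preorder on $\{0,1,\ldots\}$ truncated appropriately; the condition that all $x \leq x^{n+1}$ ($n\in S$) hold forces the up-set of $1$ (in the exponent preorder) to contain $1 + \langle S\rangle$ at least up to the index, and conversely one can realize exactly $1 + (\langle S\rangle \cap [0,N])$ for $N$ as large as desired by taking the index large. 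Choosing $N$ past the largest generator in $F_S$ and past $m$ then yields a monogenic member of $\bV_S$ on which $x \leq x^{m+1}$ fails iff $m \notin \langle S\rangle$, completing the argument.
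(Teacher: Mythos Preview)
Your ``if'' direction is correct and is exactly the paper's argument: the set of $n$ with $x\le x^{n+1}$ is an additive submonoid containing $S$, hence contains $\langle S\rangle$.

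For the ``only if'' direction you have the right witness in mind but you are making it harder than necessary, and the finiteness step is left as a genuine gap. The paper does precisely what you first suggest --- work with the language $L_S=\{a^{n+1}\mid n\in\langle S\rangle\}$ --- but the point you are missing is that finiteness of the syntactic monoid is \emph{automatic}: since $\langle S\rangle=\langle F_S\rangle$ with $F_S$ finite, one has $L_S=a\{a^s\mid s\in F_S\}^*$, a regular language, so its ordered syntactic monoid $M$ is finite by definition. There is no need to invoke Frobenius numbers, to collapse by hand above a threshold, or to separately engineer an index and period: the syntactic construction does all of that for you. One then checks directly from the definition of the syntactic preorder that $a^k\le_{L_S}a^{k(n+1)}$ for all $k\ge 0$ if and only if $n\in\langle S\rangle$; in particular $M\in\bV_S$, and if $\bV_S$ satisfies $x\le x^{m+1}$ then so does $M$, forcing $m\in\langle S\rangle$.

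Your alternative route via ``monogenic monoids with large index $N$'' could in principle be made to work, but as written it is incomplete: you do not specify the period, nor how the order on the periodic part is defined, nor why the resulting ordered monoid still satisfies $x\le x^{n+1}$ for every $n\in S$ (not just those $n\le N$). All of these issues evaporate once you recognise that the syntactic monoid of the regular language $L_S$ is already the finite monogenic witness you want.
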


\begin{proof} Let $T$ be the set of all natural numbers $n$ such that $\bV_S$
satisfies the inequality $x \leq x^{n+1}$. First observe that $T$ is an additive
submonoid of $\bbN$. Indeed, if $\bV_S$ satisfies the inequalities $x \leq xx^m$ and
$x \leq xx^n$, then it satisfies $x \leq xx^m \leq (xx^n)x^m = x^{n+m+1}$. Now $T$
contains $S$ by definition and thus also $\langle S \rangle$. It follows that if $m$
belongs to $\langle S \rangle$, then $\bV_S$ satisfies the inequality $x \leq
x^{m+1}$.
	
Suppose now that $\bV_S$ satisfies the inequality $x \leq x^{m+1}$ and let 
\[
	L_S = \{a^{n+1} \mid n \in \langle S \rangle\}.
\]
Since $\langle S \rangle = \langle F_S \rangle$, one has
\[
  L_S = a\{a^s \mid s \in F_S\}^*
\]
and thus $L_S$ is a regular language. 

We claim that the ordered syntactic monoid $M$ of $L_S$ satisfies an inequality of
the form $x \leq x^{n+1}$ if and only if $n \in \langle S \rangle$. Suppose first
that $M$ satisfies $x \leq x^{n+1}$. Then the property $a \in L_S$ implies $a^{n+1}
\in L_S$ and hence $n \in \langle S \rangle$.

Conversely, let $n \in \langle S \rangle$. We need to prove that $M$ satisfies the
inequality $x \leq x^{n+1}$, or equivalently, that $a^k \leq_{L_S} (a^k)^{n+1}$ for
all $k \geq 0$. But for each $r \geq 0$, the condition $a^ra^k \in L_S$ implies $r +
k - 1 \in \langle S \rangle$. Since $r + k(n+1) - 1 = r + k - 1 + kn$, one gets $r +
k(n+1) - 1 \in \langle S \rangle$ and hence $a^r(a^k)^{n+1} \in L_S$ as required.
This concludes the proof of the claim.

In particular, since $M$ satisfies all the inequalities $x \leq x^{n+1}$ for $n \in
S$, $M$ belongs to $\bV_S$ and thus also satisfies the inequality $x \leq x^{m+1}$,
which finally implies that $m$ belongs to $\langle S \rangle$.
\end{proof}

\begin{corollary}\label{cor:V_S}
	Let $S$ and $T$ be two sets of natural numbers. Then $V_S = V_T$ if and
	only if $\langle S \rangle = \langle T \rangle$. 
\end{corollary}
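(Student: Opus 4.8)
The plan is to derive Corollary \ref{cor:V_S} directly from Proposition \ref{prop:V_S}, which already contains all the substance. The key observation is that Proposition \ref{prop:V_S} characterizes each variety $\bV_S$ intrinsically by the set of inequalities $x \leq x^{m+1}$ it satisfies, namely exactly those with $m \in \langle S \rangle$.

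First I would prove the ``if'' direction. Suppose $\langle S \rangle = \langle T \rangle$. By Proposition \ref{prop:V_S}, the variety $\bV_S$ satisfies $x \leq x^{m+1}$ precisely when $m \in \langle S \rangle = \langle T \rangle$, which is precisely when $m \in T \subseteq \langle T \rangle$ (and more generally for all $m \in \langle T\rangle$); in particular $\bV_S$ satisfies all the defining inequalities of $\bV_T$, so $\bV_T \subseteq \bV_S$. By symmetry $\bV_S \subseteq \bV_T$, hence $\bV_S = \bV_T$.

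For the ``only if'' direction, suppose $\bV_S = \bV_T$. Let $m \in \langle S \rangle$. Then by Proposition \ref{prop:V_S} the variety $\bV_S$ satisfies $x \leq x^{m+1}$, hence so does $\bV_T = \bV_S$, and applying Proposition \ref{prop:V_S} again (now to $T$) we get $m \in \langle T \rangle$. Thus $\langle S \rangle \subseteq \langle T \rangle$, and the reverse inclusion follows by symmetry, giving $\langle S \rangle = \langle T \rangle$.

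There is no real obstacle here: the corollary is a formal consequence of the two-sided characterization in Proposition \ref{prop:V_S}, and the only mild care needed is to be clear that ``$\bV_S$ satisfies all defining inequalities of $\bV_T$'' yields the inclusion $\bV_T \subseteq \bV_S$ (a variety defined by more inequalities is smaller, so imposing a subset of the constraints gives a larger variety). One could write the whole argument in three or four lines.
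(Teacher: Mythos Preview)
Your approach is exactly the one the paper intends: the corollary is stated without proof, as an immediate consequence of Proposition~\ref{prop:V_S}. There is, however, a small slip in the ``if'' direction. From ``$\bV_S$ satisfies all the defining inequalities of $\bV_T$'' you should conclude $\bV_S \subseteq \bV_T$, not $\bV_T \subseteq \bV_S$: every monoid in $\bV_S$ then satisfies the defining inequalities of $\bV_T$, hence lies in $\bV_T$. Your parenthetical explanation (``a variety defined by more inequalities is smaller'') is correct in spirit but you drew the wrong inclusion from it. Since the argument is symmetric this does not affect the final conclusion $\bV_S = \bV_T$, but the two displayed inclusions should be swapped.
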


\noindent It would also be interesting to have a systematic approach to treat
examples similar to those given in Section \ref{sec:Three examples}. That is, find an
algorithm which takes as input a monogenic ordered monoid $M$ and outputs a set of
inequalities defining respectively $\bV$, $\PdV$ and $\PdzV$, where $\bV$ is the
variety of ordered monoids generated by $M$.

\section*{Acknowledgements}

We would like to thank the anonymous referees for their helpful comments.

\bibliographystyle{amsplain}
\bibliography{Esik}

\end{document}